\newcommand{ \hsp}{  \hspace{0.3pt}}
\newtheorem{theorem}{Theorem}[section]
\newtheorem{lemma}[theorem]{Lemma}
\newtheorem{proposition}[theorem]{Proposition}
\newtheorem{cor}[theorem]{Corollary}
\newtheorem{fact}[theorem]{Fact}
\newtheorem{problem}[theorem]{Problem}
\theoremstyle{definition}
\newtheorem{definition}[theorem]{Definition}
\newtheorem{notation}[theorem]{Notation}
\newtheorem{remark}[theorem]{Remark}
\newcommand{\wtilde}{\widetilde}
\newcommand{\Z}{{\mathbb Z}}
\newcommand{\F}{{\mathbb F}}
\newcommand{\R}{{\mathbb R}}
\newcommand{\C}{{\mathbb C}}
\newcommand{\poly}{{\rm poly}}
\DeclareMathOperator*{\Ex}{\mathbf E}
\DeclareMathOperator{\Tr}{Tr}
\newcommand{\ts}{\textsuperscript}
\titlespacing*{\section}{0pt}{7pt}{4pt}
\titlespacing*{\subsection}{0pt}{3pt}{2pt}
\titlespacing*{\paragraph}{0pt}{5pt}{3pt}
\titleformat{\section}[block]
{\LARGE\bfseries}
{\thesection.}{3pt}{}{}
\titleformat{\subsection}[block]
{\Large \sffamily}
{\thesubsection.}{3pt}{}[]
\titleformat{\paragraph}[runin]
{\large \bfseries}
{}{}{}[]
\begin{document}

\title{\LARGE \bfseries Information Causality, Szemer\'{e}di-Trotter and \\ algebraic variants of CHSH } 

\author{\sc Mohammad Bavarian\thanks{\texttt{bavarian@mit.edu}. Supported by NSF-STC Award 0939370, and NSF CCF-1065125} \\ \sc MIT  \and   \sc Peter W. Shor\thanks{\texttt{shor@mit.edu}. Supported by NSF grant CCF-0829421, and by the STC award for Science of Information under NSF grant CCF-0939370.}\\  \sc MIT }
\date{}
\maketitle
\begin{abstract}
In this paper, we consider the following family of two prover one-round games. In the $\sf{CHSH_q}$ game, two parties are given $x,y\in \mathbb{F}_q$ uniformly at random, and each must produce an output $a,b\in\mathbb{F}_q$ without communicating with the other. The players' objective is to maximize the probability that their outputs satisfy $a+b=xy$ in $\mathbb{F}_q$. This game was introduced by Buhrman and Massar \cite{buhrman} as a large alphabet generalization of the {\sf{CHSH}} game---which is one of the most well-studied two-prover games in quantum information theory, and which has a large number of applications to quantum cryptography and quantum complexity. 
Our main contributions in this paper are the first asymptotic and explicit bounds on the entangled and classical values of $\sf{CHSH_q}$, and the realization of a rather surprising connection between $\sf{CHSH_q}$ and geometric incidence theory. On the way to these results, we also  resolve a problem of Paw{\l}owski and Winter \cite{hyperbits} about pairwise independent Information Causality, which, beside being interesting on its own, gives as an application a short proof of our upper bound for the entangled value of $\sf{CHSH_q}$.

\end{abstract}

\section{Introduction}
In this work, we study a certain family of two prover one-round games. The study of multiprover one-round games (from now on, simply referred to as games) began in the late $20^{\rm{th}}$ century in the context of multiprover interactive proof systems in computer science \cite{ben1988}, and also in the context of the Bell inequalities in physics \cite{bell1964} with the topic continuing to be of significant interest in both computer science and quantum physics to this day (see for example \cite{briet2013,buhrman2011,dinur2014,junge2011,raz2011}).  The particular family of games we shall study was first introduced by Buhrman and Massar \cite{buhrman} nearly a decade ago. It is defined as follows. 

\begin{definition} \label{CHSH_q_def} Let $q$ be a prime, or a prime power, and $\F_q$ the unique field of size $q$. In the $\bf\mathsf{CHSH_q}$ game, two non-communicating parties Alice and Bob are each given an input $x$ and $y$ from $\F_q$ chosen uniformly at random. Their objective is to maximize the probability that their outputs $a,b\in \F_q$ satisfy $a+b=xy$.
\end{definition}

\begin{figure}
\centerline{
\includegraphics[width=0.7\textwidth]{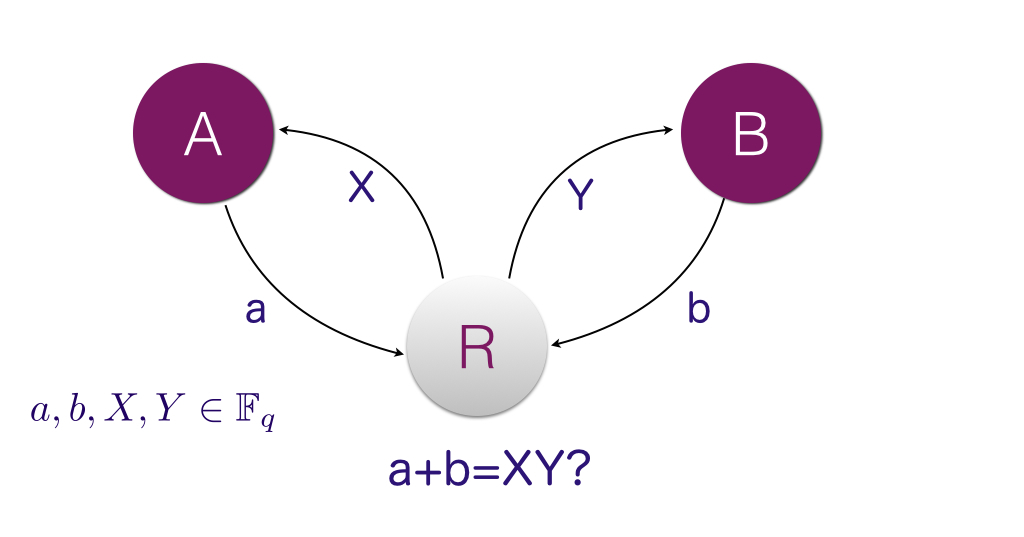}
}
\caption{The $\mathsf{CHSH_q}$ game. }
\end{figure}
The celebrated {\sf CHSH} game, named after its inventors Clauser, Horne, Shimnoy and  Holt \cite{chsh_paper}, is the case $q=2$ of the above definition. It is arguably the most well-studied game in quantum information theory \cite{werner2001bell}, and has many applications in the study of entanglement (\cite{brunner2013,werner2001bell}) and also in quantum cryptography \cite{ekert91} and quantum complexity
\cite{RUV}. Given the major role {\sf CHSH} plays in many aspects of quantum information theory, it has been of great interest to find well-structured asymptotic generalizations of {\sf CHSH}, as this could have much impact in the study of non-locality in general, and in the above applications of ${\sf CHSH}$ in particular. In this paper we focus on Buhrman and Massar's generalization, described  in Definition \ref{CHSH_q_def}, since we expect that the algebraic form of ${\sf CHSH_q}$ would lead to a interesting and useful structure for this family of games.  
In fact one of the main results of our work is the realization of a strong connection between the ${\sf CHSH_q}$ game and some remarkable mathematical results in incidence geometry and arithmetic combinatorics. This surprising connection combined with our other results further supports the intuition about the rich structure of these games. 


Despite the simple form of this family of games and our precise understanding of the case $q=2$, it turns out that analyzing $\sf{CHSH_q}$ beyond the $q=2$  case is a rather difficult task. This difficulty is not restricted to analyzing ${\sf CHSH_q}$; it is actually an instance of a more general phenomenon, and is essentially shared with any game with $q\geq 3$. The main issue here is that we do not know a  large alphabet generalization of the foundational result of Tsirelson on $\sf SDP$ characterization of the entangled value of $\sf XOR$ games (which are a subclass of games with $q=2$).\footnote{ See Definition \ref{def:xor} for a precise definition of {\sf XOR} games. The games we consider here are perhaps the most natural higher alphabet generalization of $\sf{XOR}$ games as the referee's acceptance predicate only depends on the sum over $\F_q$  of player's outputs. Remarkably, such a simple generalization  from $q=2$ to say $q=3$ seem to make a substantial difference.} This result of Tsirelson, combined with the tools of convex analysis such as complementary slackness and strong duality, gives a powerful path toward analyzing the entangled value and the optimal strategies for $\sf{XOR}$ games. The unavailability of the above powerful tool has resulted in a scarcity of results for analyzing the games in the case $q\neq 2$--- which is regarded as one of the central challenges in the study of non-local games (see \cite{brunner2013,manyquestions_fewanswers}). Indeed, a major goal of this work is to expand on the set of examples and tools available for analyzing games beyond the relatively well-understood case of $\sf{XOR}$ games, which we do in the context of studying ${\sf CHSH_q}$. We note that our results do not go  far on addressing the fundamental problems regarding the complexity of entangled two prover non-XOR games. However, we believe that for tackling this fundamental problem, a certain amount of preparatory work in the form of establishment of new tools and examples is a definite prerequisite. We hope that our work constitutes an advance in the foundation necessary for tackling the aforementioned fundamental problems.


\paragraph{Results.}
For a game $G$, we denote by  $\omega(G)$ and $\omega^*(G)$ the maximum winning probability of classical and quantum strategies, respectively. These are usually referred to as the classical and entangled values, in short. Recall that since the quantum strategies contain the classical ones as a subset, it is clear that $\omega(G)\leq \omega^*(G)$. 

Regarding the entangled value of $\bf\mathsf{CHSH_q}$, we give two different proofs of the following theorem,
which generalizes the well-known upper
bound of Tsirelson \cite{boris_1} for the original {\sf{CHSH}} game. 
\begin{theorem}\label{main_thm} For any prime or prime power $q$ we have
\[ \omega^*({\bf\mathsf{CHSH_q}}) \leq \frac{1}{q}+ \frac{q-1}{q}\frac{1}{\sqrt{q}} \, .\]
\end{theorem}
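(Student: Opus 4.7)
My plan is to generalize Tsirelson's Fourier-analytic proof of the original CHSH bound by replacing the $\pm 1$-valued observables over $\F_2$ with \emph{generalized observables} built from the additive characters of $\F_q$. First, by Naimark's dilation theorem, I may assume without loss of generality that Alice's and Bob's measurements $\{A_x^a\}_{a\in\F_q}$ and $\{B_y^b\}_{b\in\F_q}$ are projective. Writing $p$ for the characteristic of $\F_q$, let $\chi_t(z):=e^{2\pi i \Tr(tz)/p}$ for $t\in\F_q$ denote the additive characters of $\F_q$, and define
$$
A_x(t) := \sum_{a\in\F_q} \chi_t(a)\,A_x^a,\qquad B_y(t) := \sum_{b\in\F_q} \chi_t(b)\,B_y^b.
$$
Because the POVMs are projective, each $A_x(t)$ and $B_y(t)$ is unitary, with $A_x(0)=B_y(0)=I$.

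Next, I use the character expansion $\mathbf{1}[a+b-xy=0] = \frac{1}{q}\sum_{t\in\F_q}\chi_t(a+b-xy)$ to rewrite the winning probability as
$$
\omega^*(\mathsf{CHSH}_q) \;=\; \frac{1}{q^3}\sum_{t\in\F_q}\sum_{x,y\in\F_q}\chi_t(-xy)\,\langle\psi|\,A_x(t)\otimes B_y(t)\,|\psi\rangle.
$$
The $t=0$ summand contributes exactly $1/q$, so the theorem reduces to showing that each of the $q-1$ terms with $t\neq 0$ has absolute value at most $q^{-3/2}$.

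For fixed nonzero $t$, I set the unit vectors $|\alpha_x\rangle := (A_x(t)^*\otimes I)|\psi\rangle$ and $|\beta_y\rangle := (I\otimes B_y(t))|\psi\rangle$, so that the inner sum equals $\sum_{x,y} M_{x,y}\langle\alpha_x|\beta_y\rangle$ with $M_{x,y}:=\chi_t(-xy)$. The key observation is that, by orthogonality of the additive characters, $\sum_x \chi_t(-xy)\overline{\chi_t(-xy')} = q\,\mathbf{1}[y=y']$, i.e.\ $MM^*=qI$, so $M/\sqrt{q}$ is a unitary Fourier-type matrix of $\F_q$. A single application of Cauchy--Schwarz in the Alice index, combined with this Fourier identity, then gives
$$
\Bigl|\sum_{x,y} M_{x,y}\langle\alpha_x|\beta_y\rangle\Bigr| \;\leq\; \Bigl(\sum_x \|\alpha_x\|^2\Bigr)^{1/2}\Bigl(\sum_x \Bigl\|\sum_y M_{x,y}\beta_y\Bigr\|^2\Bigr)^{1/2} \;\leq\; \sqrt{q}\cdot\sqrt{q^2} \;=\; q^{3/2}.
$$
Dividing by $q^3$ and summing over the $q-1$ nonzero characters produces the claimed bound $\frac{1}{q}+\frac{q-1}{q}\cdot\frac{1}{\sqrt{q}}$.

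The core conceptual step is recognizing that the phase matrix $\chi_t(-xy)$ is (up to normalization) the $\F_q$ Fourier matrix and hence has operator norm $\sqrt{q}$; once this is identified, the rest is a clean $\F_q$-analogue of the usual sign-matrix Cauchy--Schwarz argument for standard CHSH. I do not anticipate a significant obstacle beyond being careful that the measurements can be taken projective, so that the generalized observables $A_x(t), B_y(t)$ are genuinely unitary and the vectors $|\alpha_x\rangle,|\beta_y\rangle$ are unit vectors.
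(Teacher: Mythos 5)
Your proof is correct, and its core is the same as the paper's second proof: the decisive estimate is exactly Lemma~\ref{fourier_prop}, namely that the character matrix $M_{x,y}=\chi_t(-xy)$ satisfies $M^*M=qI$ and hence the twisted Gram sum $\sum_{x,y}M_{x,y}\langle\alpha_x|\beta_y\rangle$ is at most $q^{3/2}$ by one Cauchy--Schwarz. Where you differ is in how the winning probability gets related to such a sum. The paper first invokes a regularization lemma (Lemma~\ref{reg_lem}) so that the bias equals a \emph{single} character sum $\sum_k p_k\omega^k$, and it packages the argument as a reduction to the distributed game $\mathsf{CHSH}_q^{\rm dist}$ (Proposition~\ref{nonlocal_prop}), motivated by the Linden et al.\ philosophy that quantum and classical coincide for distributed computation. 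You instead expand the indicator $\mathbf{1}[a+b=xy]$ over \emph{all} $q$ additive characters, peel off the trivial $t=0$ term (which gives exactly the $1/q$), and bound each of the $q-1$ nontrivial terms by $q^{-3/2}$ in absolute value. This buys you two things: you never need the regularization lemma (non-regular strategies are handled automatically, since you only take absolute values of the nontrivial Fourier coefficients), and you avoid the detour through $\mathsf{CHSH}_q^{\rm dist}$ entirely, yielding a shorter and arguably cleaner derivation of the same bound. What you lose is the conceptual content the paper is after --- the connection to nonlocal distributed computation and to the information-causality principle of Theorem~\ref{IC_st} --- but as a proof of the inequality itself your argument is complete; the only hypotheses to be careful about (projectivity via dilation, unitarity of $A_x(t)$, and that $t\neq 0$ makes $x\mapsto\chi_t(x(y'-y))$ a nontrivial character when $y\neq y'$) are all correctly handled.
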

Despite advances due to several researchers \cite{buhrman, ji2008,
liang2009, wangfunc} in analyzing the value of $\bf\mathsf{CHSH_q}$ games,
prior to our work there was no result, even in conjectured form, known for the asymptotic behavior of the classical and entangled value of $\bf\mathsf{CHSH_q}$. Even for small values of $q$, most results, with the exception of the original $\frac{1}{3}+ \frac{2}{3} \frac{1}{\sqrt{3}}$ upper bound of Buhrman and Massar for $q=3$, were obtained using numerical methods. Thus, our work is the first to obtain asymptotic results on these games.  

One interesting fact about the bound in Theorem \ref{main_thm} is the striking similarity of the bound $1/q+ (q-1)/ (q\sqrt{q})$ here, with the influential tight upper bound of Tsirelson \cite{boris_1} of $1/2+1/2\sqrt{2}$ for  {\sf{CHSH}}. This striking resemblance gives rise to the natural question of asymptotic (or exact) tightness of the bound in Theorem \ref{main_thm}. 
Although we cannot answer the above questions in full, we provide some answers which clarify the situation to some extent, and highlight some of the relevant issues. 

\begin{theorem}\label{main_thm_classical} There exists a universal constant $\epsilon_0>0$ such that for any prime $p$ and $k\geq 1$ we have 
\[ 
\omega(\bf\mathsf{CHSH_q})= \left\{ 
\begin{array}{lccr} \Omega(q^{-\frac{1}{2}}) & \mbox{for} & q=p^{2k} & \hspace{-8pt}  \\
 O(q^{-\frac{1}{2}-\epsilon_0}) & \mbox{for} & q=p^{2k-1} &\hspace{-8pt} . \end{array} \right .
 \]
\end{theorem}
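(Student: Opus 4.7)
The key reformulation is incidence-theoretic. A deterministic strategy $(a,b)$ defines Alice's function graph $P_a := \{(x, a(x)) : x \in \F_q\}\subset \F_q^2$ (a set of $q$ points) and Bob's line set $L_b := \{\ell_y : y \in \F_q\}$, where $\ell_y := \{(x,z)\in \F_q^2 : z = xy - b(y)\}$ is the line of slope $y$; the $q$ lines of $L_b$ have pairwise distinct slopes. The winning condition $a(x) + b(y) = xy$ is precisely the incidence $(x, a(x)) \in \ell_y$, so
\[
q^2 \cdot \omega(\sf{CHSH_q}) \;=\; \max_{a,b} I(P_a, L_b),
\]
where $I(\cdot,\cdot)$ denotes the number of point-line incidences. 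Both halves of the theorem are therefore incidence estimates of Szemer\'edi-Trotter flavor.

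\emph{Lower bound for $q = p^{2k}$.} Let $K := \F_{p^k}$, the subfield of $\F_q$ of size $\sqrt q$. The extremal configuration for finite-field Szemer\'edi-Trotter over $\F_q^2$ is the subplane $K\times K$: any line of slope in $K$ meets $K\times K$ in $|K|$ points, so the $|K|$ such lines through the origin alone contribute $|K|^2 = q$ incidences, and using all lines over $K$ one reaches $\Theta(q^{3/2})$. The plan is to simulate this inside our function-graph/distinct-slopes setting. Using the decomposition $\F_q = K \oplus K\theta$ and writing $x = x_0 + x_1\theta$, I would design $a$ so that its graph contains, for as many slopes $y \in \F_q$ as possible, a length-$|K|$ ``$K$-affine piece'' of slope $y$---for instance by building $a$ from a combination of the Frobenius $x\mapsto x^{p^k}$ and the identity with a carefully chosen correction, and picking $b(y)$ as the appropriate $y$-dependent shift that aligns $\ell_y$ with such a piece. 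Correctness reduces to showing that $\sum_y|\{x:a(x)+b(y)=xy\}| = \Omega(q^{3/2})$, which may be verified directly or via the Fourier expansion
\[
q^2 \omega(\sf{CHSH_q}) \;=\; q + \frac{1}{q}\sum_{t \ne 0} \sum_{x,y}\chi\bigl(t(a(x)+b(y)-xy)\bigr),
\]
by arguing that the inner sum is a subfield-type Gauss sum of magnitude $\Omega(q^{3/2})$ with a coherent phase across $t \in \F_q^*$.

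\emph{Upper bound for $q = p^{2k-1}$.} I would invoke a finite-field Szemer\'edi-Trotter bound of Bourgain-Katz-Tao type: for $N$ points and $N$ lines in $\F_q^2$, the incidence count is $O(N^{3/2-\epsilon})$ for some universal $\epsilon > 0$, provided that neither the points nor the slopes are essentially concentrated in a translate of a proper subfield. Every proper subfield $F \subsetneq \F_{p^{2k-1}}$ has $|F|=p^d$ with $d \mid (2k-1)$ and $d < 2k-1$, and since $2k-1$ is odd its smallest prime factor is at least $3$, whence $d \le (2k-1)/3$ and $|F| \le q^{1/3}$. Thus any affine subplane $F^2$ has size at most $q^{2/3}$, strictly smaller than $|P_a| = |L_b| = q$, so the extremal subfield configurations cannot be realized. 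A standard dyadic / popularity reduction isolating a regular sub-configuration then allows the BKT-type bound to apply, giving $I(P_a, L_b) = O(q^{3/2 - \epsilon_0})$ and hence $\omega(\sf{CHSH_q}) = O(q^{-1/2-\epsilon_0})$.

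\emph{Main obstacles.} For the lower bound, the delicate point is certifying that the explicit $(a,b)$ actually delivers the $\Omega(q^{3/2})$ contribution with a coherent phase rather than canceling: naive subfield-inspired strategies (quadratic or purely linearized-polynomial) often yield Gauss sums whose phases average to zero over $t\in \F_q^*$, so the construction must be engineered to preserve a uniform sign. For the upper bound, the main technical work lies in converting the odd-degree condition into the precise ``subfield-freeness'' hypothesis required by the BKT-type theorem and in the reduction to a regular sub-configuration, which typically relies on sum-product estimates and additive-combinatorial tools (Pl\"unnecke-Ruzsa, Balog-Szemer\'edi-Gowers).
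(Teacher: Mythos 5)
Your incidence reformulation and your upper bound for $q=p^{2k-1}$ follow the paper's route: the paper proves exactly your identity $q^2\,\omega(\mathsf{CHSH_q})=\max_{a,b} I(P_a,L_b)$ and then cites Jones's extension of Bourgain--Katz--Tao to $\F_q$ with $q$ an odd prime power, which rests on precisely the observation you make (every proper subfield of $\F_{p^{2k-1}}$ has size at most $q^{1/3}$). No dyadic regularization is needed, since $|P|=|L|=\Theta(q)$ already lies in the admissible range. That half of your proposal is sound.

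The lower bound, however, has a genuine gap. You correctly observe that the extremal subfield configuration ($P=K\times K$ with the $q$ lines $z=cx-d$, $c,d\in K$, giving $q^{3/2}$ incidences) is not a legal strategy, since it uses only $\sqrt q$ distinct slopes and $\sqrt q$ distinct $x$-coordinates. But your proposed repair --- engineering an explicit $a$ from Frobenius/norm-type maps so that its graph contains a $K$-affine piece in many directions --- is left as a plan, and the natural candidates fail. For instance, for $f(x)=N_{\F_q/K}(x)=x^{p^k+1}$ one checks in coordinates $x=x_0+x_1\theta$, $\theta^2=d$ a non-square of $K$, that for every $y\neq 0$ the fibers of $x\mapsto f(x)-xy$ have size at most $2$: the $\theta$-component of the equation confines $x$ to a $K$-line, on which the $K$-component $x_0^2-dx_1^2-(\cdots)$ remains a non-degenerate quadratic because $d$ is a non-square. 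So no value is $\Omega(\sqrt q)$-popular and that strategy wins only $O(q^{-1})$ of the time; your Fourier criterion is likewise circular, since establishing Gauss sums of magnitude $q^{3/2}$ with coherent phase over $t\in\F_q^*$ is equivalent to the counting problem itself. The paper's missing ingredient is Lemma~\ref{classical:lemma}: apply a \emph{random projective transformation} of $\mathbb{P}\F_q^2$ to the subfield configuration, then delete all but one line from each slope class and all but one point from each vertical line. A short computation shows each line (resp.\ point) survives with probability $\Omega(1)$, the two deletion processes are independent, and hence a constant fraction of the $q^{3/2}$ incidences survives in expectation while the surviving configuration is a legal strategy. (A random \emph{affine} map would not do, as it preserves slope classes.) Without this step, or a genuinely explicit construction that you have not supplied, your lower bound is not established.
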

To prove this theorem, we adopt a new view of $\sf{CHSH_q}$. The main insight here is the following.
\begin{fact}\label{important_fact}
A classical strategy for $\bf\mathsf{CHSH_q}$ is in direct correspondence with a configuration of $q$ non-vertical lines and $q$ points in $\F_q^2$ , with no two lines having the same slope, and no two points lying on the same vertical line. Given such a configuration of lines and points, the winning probability of the corresponding strategy for $\sf{CHSH_q}$ is proportional to the number of point-line incidences.
\end{fact}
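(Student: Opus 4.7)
The plan is to make the bijective correspondence explicit between deterministic strategies for $\sf{CHSH_q}$ and point-line configurations in $\F_q^2$ of the stated form, and then observe that the winning probability is literally (up to the factor $1/q^2$) an incidence count. First, I would reduce to deterministic strategies: since $\omega(\sf{CHSH_q})$ is linear in the joint distribution over strategies and the extreme points of the classical strategy polytope are the deterministic pairs, the maximum winning probability is attained by some deterministic pair of functions $a : \F_q \to \F_q$ and $b : \F_q \to \F_q$, with
\[
\omega = \frac{1}{q^2} \bigl| \{(x,y) \in \F_q^2 : a(x) + b(y) = xy \} \bigr|.
\]

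Next, I would encode Alice's strategy as $q$ lines and Bob's as $q$ points. Rewriting the winning predicate in the form $Y = x \cdot X - a(x)$ with $(X,Y) = (y, b(y))$, for each input $x \in \F_q$ the quantity $a(x)$ determines a non-vertical line $\ell_x : Y = xX - a(x)$ of slope $x$. As $x$ varies over $\F_q$, the family $\{\ell_x\}$ consists of $q$ non-vertical lines with pairwise distinct slopes (the slope of $\ell_x$ is $x$ itself). Likewise, Bob's strategy produces the $q$ points $P_y = (y, b(y))$, which lie on $q$ distinct vertical lines, since their first coordinates range over all of $\F_q$. Both encodings are visibly invertible: given a non-vertical line of slope $x$ one reads off $-a(x)$ as its $Y$-intercept, and given a point with first coordinate $y$ one reads off $b(y)$ as its second coordinate. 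Hence the correspondence between strategies and configurations of the stated form is a bijection.

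Finally, the key observation is that an input pair $(x,y)$ is winning if and only if $P_y \in \ell_x$, so
\[
\omega = \frac{I(\{\ell_x\}, \{P_y\})}{q^2},
\]
where $I(\cdot,\cdot)$ denotes the number of point-line incidences. This yields the Fact. There is no genuine obstacle in the argument itself --- it is essentially a change of notation, and the only tiny checks are that each $\ell_x$ is indeed non-vertical and that the slopes (respectively first coordinates) are pairwise distinct, both of which are immediate from the fact that $x$ (respectively $y$) ranges over all of $\F_q$. The content of the Fact is conceptual rather than technical: it shows that estimating $\omega(\sf{CHSH_q})$ is exactly the problem of bounding the maximum number of incidences between $q$ non-vertical lines with distinct slopes and $q$ points with distinct first coordinates in $\F_q^2$, opening the door to applying Szemer\'edi--Trotter-type incidence bounds in the proof of Theorem~\ref{main_thm_classical}.
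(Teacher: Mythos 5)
Your proposal is correct and follows essentially the same argument as the paper: the paper encodes Alice's strategy $f$ as the $q$ points $(x,f(x))$ and Bob's strategy $g$ as the $q$ lines $z_2 = y z_1 - g(y)$, whereas you swap the roles of the two players, which is an immaterial relabeling. The reduction to deterministic strategies and the identification of the winning condition with a point-line incidence match the paper's reasoning exactly.
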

The correspondence in Fact \ref{important_fact} allows us access to some powerful results in arithmetic combinatorics where questions related to the incidences of collections of points and lines over finite fields have seen much progress recently. Most relevant to our problem is the celebrated finite field Szemer\'{e}di-Trotter theorem of Bourgain, Katz and Tao \cite{BKT} which states that, under a certain size restriction satisfied in our case, the number of incidences between a collection of points $P$ and a collection of lines $L$ is at most of the size $|P|^{\frac{3}{4}-\epsilon_0} |L|^{\frac{3}{4}-\epsilon_0}$ for some $\epsilon_0>0$.\footnote{The original proof of Bourgain \emph{et al.}~did not show an explicit bound on $\epsilon_0$ however work in recent years have obtained explicit bounds of the form $\epsilon_0>\frac{1}{700}$. However, the truth is believed to be much better than what is proved in these works.} This result combined with Fact \ref{important_fact} is essentially sufficient to prove the upper bound in Theorem \ref{main_thm_classical}. 

In fact, the relation between ${\sf CHSH_q}$ and the finite field Szemer\'{e}di-Trotter theorem is closer than it might first appear to be. As we show in Section \ref{sec:classical}, understanding the classical value of $\sf CHSH_q$ is in some sense equivalent to the finite field Szemer\'{e}di-Trotter theorem with the appropriate parameters. We show this by proving that the restrictions on the points and lines in Fact \ref{important_fact}, which is crucial in order to translate the geometric configuration to a legal $\sf CHSH_q$ strategy, can actually be relaxed without losing much in the bounds. 

Going back to the quantum and classical values of $\sf{CHSH_q}$, it is important to notice that our classical lower bound in Theorem \ref{main_thm_classical} shows that there is no asymptotic separation between the quantum and classical values of the game for $q=p^{2k}$ , while the classical upper bound of $O(q^{-1/2-\epsilon_0})$ leaves open the possibility of such a separation in the setting of $q=p^{2k-1}$. Hence the most obvious gap in our bounds is captured by the following problem. 

\begin{problem}[Open]\label{problem:open}Does there exists an infinite family of $q=p^{2k-1}$ such that $\omega^*({\sf CHSH_q})= \Omega(q^{-\frac{1}{2}})$, or some $\delta>0$ and an infinite family of $q=p^{2k-1}$ such that $\omega^*({\sf CHSH_q})=O(q^{-\frac{1}{2}-\delta})$?\hspace{2pt}\footnote{It is possible that the answer to both questions above are positive. In fact this would be the best possible outcome from the point of view of applications.}
\end{problem}

Given the geometric picture in Fact \ref{important_fact}, we observe that our main open problem, Problem \ref{problem:open}, is related to a question of Kempe and Kasher \cite{kasher2012} about the security of Bourgain's two-source extractor \cite{bourgain2005, Rao2007} in the presence of quantum memory. The main point is that Bourgain's extractor consists of two main ingredients: The first is a crucial preprocessing step, which (roughly-speaking) makes sure the two sources are in generic position with respect to each other. The second is an application of a Hadamard extractor on the two preprocessed sources. It was shown by Kasher and Kempe that the bare Hadamard extractor remains secure in the presence of quantum adversaries. Hence, the missing part in the analysis of Bourgain's extractor in the presence of quantum memory is the first step, analysis of which heavily relies on Szemer\'{e}di-Trotter theorem on finite fields. Thus, the core of both Kasher and Kempe's question and that of ours seem to be the extent to which finite-field Szemer\'{e}di-Trotter theorem can be (or fails to be) extended to the quantum setting.

\subsection{Techniques}
Let us start by giving more detail on our two different proofs of
Theorem \ref{main_thm}. A common aspect of both these methods is that they
avoid a direct analysis of the norms of associated game operators.
Instead, they take a novel indirect approach via reductions. In order to rule out a certain winning probability $p$ for $G$, we show that
 the ability to win instances of $G$ with probability greater than $p$ would allow us 
to achieve a winning probability $p'$ for a more generic game $G'$, one which we
already know to be impossible.

Both our methods for proving the
upper bound on the entangled value of $ \mathsf{CHSH_q}$ game work by a
reduction to another generic result: the first method uses a reduction to a
large alphabet variant of the result of Linden et al.~on
quantum and classical strategies for certain distributed tasks \cite{nonlocal_comp}.\footnote{Linden et al.~prove their result only about $\F_2$-output games. We generalize this result to larger alphabets only as much as we need for our particular application. A full generalization of all aspects of the result of Linden et al. could be interesting but is beyond the scope of this article.} This approach has the advantage of being self-contained and quite simple. The main idea here is to analyze a slightly different variant of $\sf CHSH_q$ game, called $\sf CHSH_q^{dist}$, in which two parties receive $(\alpha,\gamma)\in \F_q^2$ and $(\beta,\delta)\in \F_q^2$, and their objective is to produce outputs $a$ and $b$ satisfying $a+b=(\alpha+\beta)(\gamma+\delta)$. 

The other approach is by a reduction to a new form of the principle of information causality due to Paw{\l}owski and Winter \cite{hyperbits}, further generalized in this paper. A benefit of this approach is that executing the approach naturally leads us to an open problem of Paw{\l}owski and Winter \cite{hyperbits} which we resolve in Appendix \ref{IC_app}. Given the amount of attention paid to information causality  principle in the foundations of quantum physics in recent years, this result is certainly significant on its own right. What makes this result even more attractive is that using it we can give as an application a very short proof of Theorem \ref{main_thm}. 

To discuss our result about \textsl{pairwise independent information causality}, it is best to first recall the standard scenario for information causality (IC) \cite{IC_main}.
 \begin{definition}[IC] \label{IC_def} In an information causality game, Alice
 is given an input $\mathbf{X}=(X_1, X_2, \ldots, X_N)$ from a known
 distribution $\pi$, and Bob an index $b\in [N]$. After making a
 measurement on her system, Alice sends a message $\alpha\in \Sigma$ to Bob. After receiving $\alpha$ from Alice, Bob makes a measurement on his system producing an output $Z \in \Lambda$. Alice and Bob's goal is to maximize the quantity $ IC(A,B)=\sum_{i=1}^N I(X_i; Z | b=i)$.\footnote{A moment of reflection shows that the distribution of Bob's input $b\in [N]$ does not play any role here. Hence,  it can be taken to be uniform over $[N]$ for simplicity.}

 \end{definition}
The main idea behind the \emph{principle of information causality} is that assuming a certain form of independence among Alice's inputs (i.e.~$\{X_i\}_{i=1}^N$), there is a stringent limit to the amount of correlation, as quantified by $IC(A,B)$, that the two parties can create by limited communication---even given arbitrarily entanglement between Alice and Bob. Our main result about the information causality game is as follows.
 \begin{theorem}[Pairwise Independent IC] \label{IC_st} Consider an information causality game as in Definition \ref{IC_def}. Assume Alice's input $\mathbf{X}=(X_1,X_2,\ldots, X_N)$ is drawn from an unbiased (i.e.~with uniform marginals) pairwise independent distribution. Then we have 
 \[ IC(A,B)= \sum_{i=1}^N I(X_i; Z |b=i)=O_{|\Sigma|, |\Lambda|} (1)\, , \]
where $O_{|\Sigma|, |\Lambda|}(1)$ is a quantity depending only on the sizes of the alphabets of Alice's message to Bob and Bob's output, and not $N$.\footnote{In most applications of information causality type theorems, including ours, the exact dependence on $|\Sigma|$ and $|\Lambda|$ is not important. The bound here is linear in $|\Sigma| |\Lambda|$.}
 \end{theorem}
The original setting of information causality from \cite{IC_main} is the case where $\{X_i\}_{i=1}^{N}$ are fully independent. Paw{\l}owksi and Winter in \cite{hyperbits} strengthened the original information causality by showing that a similar bound holds even if the full independence condition is \textsl{relaxed} to pairwise independence, under the restriction that both $Z$ and $\alpha$ have only two outcomes. They posed as an open problem to extend their result to larger alphabets. As it turns out such a theorem is precisely what we need to prove Theorem \ref{main_thm} by our information theoretic approach. We prove this Theorem in Appendix \ref{IC_app} and use it in Section \ref{IC_approach_sec} to prove Theorem \ref{main_thm}. 
 
Finally we end this part with a technical remark about Theorem \ref{IC_st}.
\begin{remark}
(i) There is a crucial difference between $I(X_1 \ldots X_N; Z|b=i)$ and $\sum_{i=1}^N I(X_i; Z |b=i)$. It follows from the data processing inequality that the former is bounded by $O_{|\Sigma|}(1)$ with no need for any independence assumption. On the other hand, taking $X_1=X_2=\ldots= X_N={\rm Bernoulli}(\frac{1}{2})$, we see that $IC(A,B)$ could be as large as $\Omega(N)$ in this case. Thus, some independence assumption is necessary for  Theorem \ref{IC_st} type results. (ii) The original information causality result of \cite{IC_main} can be proved by an appropriate use of data processing inequality and the chain rule applied to $I(X_1 X_2 \ldots X_N; \alpha)= O_{|\Sigma|}(1)$. 
Similar approaches run into trouble in the pairwise independence setting because of the correlations between $X_i$'s. 
\end{remark}
\subsection{Prior work}\label{prev_work}
Buhrman and Massar were the first to study $\bf\mathsf{CHSH_q}$ for $q\neq 2$ obtaining the upper bound of $1/3+2/3\sqrt{3}$ for $q=3$ using information theoretic methods (different from the ones used here). Although the algebraic view of {\sf{CHSH}} is in retrospect more or less clear, this was not the language originally used to describe the {\sf{CHSH}} game.  Hence, the contribution of Buhrman and Massar was to both realize this view, and to tackle the next interesting case after the original {\sf{CHSH}}, which was the case of $q=3$. In the same work, they mentioned that their method seemed not to work for higher values of $q$. After the work of Buhrman and Massar, the problem was attacked for small values of $q$ by Ji et al. \cite{ji2008} and then by Liang, Lim and Deng \cite{liang2009} who, through a mix of numerical work and analytic insights, obtained several upper and lower bounds for quantum and classical value of the games for $q$'s up to  $13$. Since the approach in the above line of work is mostly numerical, it is hard to infer much about the asymptotic questions of interest from the bounds there. The only work prior to ours to obtain some general results about $\bf\sf CHSH_q$ is \cite{wangfunc}.  There, Wang proved various interesting results including a large alphabet generalization of a result of van Dam \cite{vandam}, on the collapse of communication complexity in the presence of perfect $\bf\sf CHSH_q$ oracle boxes (this is a natural higher alphabet analogue of the well-known Popescu-Rohlich box \cite{PR_box}). He also approached the problem of analyzing the value of $\sf{CHSH_q}$ using the principle of information causality; however, the arguments there did not seem to provide explicit bounds. We discuss the similarities and the differences between our approach and Wang's in more detail in Section \ref{sec:expos}. This provides some intuition on why basic information causality of \cite{IC_main} seems insufficient to achieve the bound in Theorem \ref{main_thm} showing the advantage offered by the strengthened version in Theorem \ref{IC_st}.

Before finishing the discussion of the prior work, let us take some time to elaborate more on the work of Liang et al.~\cite{liang2009}. The upper bounds on their work is based on numerically solving the \textsf{SDP} hierarchies from \cite{navas2008}. Their method of lower bound (described on the bottom of page 4 of their paper) is a variant of the following natural heuristic: One starts with an arbitrary choice of Alice's and Bob's strategies and a joint state $\psi \in \C^d\otimes \C^d$ in a candidate dimension $d$. Then one runs the following three-step iterative procedure: We first optimize for Bob's strategy, given the state $\psi$ and Alice's strategy. Then, we optimize Alice's strategy fixing Bob's strategy and the state $\psi$. Finally, we optimize for $\psi$ while fixing everything else. The first two steps are \textsf{SDP} computations, and the latter is simply a maximum eigenvalue computation. The above process is repeated until a (near) local optimum is reached.\footnote{
Let us note that an anonymous reviewer, based on their own personal experiments, has raised some doubts regarding the validity of the lower bounds for $q=3$ in Liang et al. We believe that in this case Liang et al.~result is correct, as prior to becoming aware of their result we independently obtained their lower bound for $q=3$ via different methods.}

\subsection{Organization of the paper} To gain a better understanding of the main ideas behind the information theoretic approach in analyzing the two prover games, we recommend the reader to review the exposition of the approach as presented in Section \ref{sec:expos}. Although the arguments in Section \ref{sec:expos} are not essential for proving the main results, it may provide much intuition and make the steps in various arguments more transparent. The most technical part of the paper is Appendix \ref{IC_app} which contains the proof of the open problem of Paw{\l}owski and Winter. This section can be skipped if one is only interested in the application of Information Causality type principle to two-prover games rather than the proof of such results. Also for the sake of brevity, the second proof of Theorem \ref{main_thm} based on reduction to distributed non-local computation and  the proof of Lemma \ref{classical:lemma}, which requires more familiarity with projective spaces, are transferred from the main body to the appendices in this version.
\section{Preliminaries}\label{prelim}
In this work, $q$ will denote a prime power unless otherwise noted. We let $[n]$ denote the set of positive integers in the range $1$ to $n$. By $O_{\alpha,\beta}(\cdot)$ we mean a quantity which is bounded by a universal constant depending only on the parameters $\alpha$ and $\beta$. In parts of the paper, we require a basic knowledge of information theory. A brief introduction to relevant  notions from this area is presented in Appendix \ref{appendix:info_theory}.

By a game we always mean a two prover one-round refereed game unless otherwise specified. For our purposes, we only need the definition of following subclass of games.
\begin{definition}\label{def:xor} An {\sf XOR}-game $G$ is specified by the following data:  a set of question $Q_1$ for Alice and a set of question $Q_2$ for Bob, a distribution $\pi$ over $Q_1\times Q_2$, and matrix $V$ with entries from $\F_2$ of size $Q_1\times Q_2$. Given $q_1\in Q_1$ to Alice, and $q_2 \in Q_2$ to Bob with probability $\pi(q_1,q_2)$, Alice and Bob must each produce $a,b\in \F_2$ respectively. They succeed if $a+b= V(q_1,q_2)$. If in the above setting one replaces the additive group of $\F_2$ with the additive group of $\F_q$, the resulting class of games will be called {\sf q-XOR} games.
\end{definition}
 For \textsf{q-XOR} games, it is convenient to define the \emph{bias} of a strategy as an alternative way of quantifying the winning probability. 
\begin{definition}\label{reg_def}
For any probability $0\leq p_{win}\leq 1$ corresponding to winning probability of some strategy for a \textsf{q-XOR} game, we define the bias parameter of the strategy to be 
\[E \overset{\rm{def}}{=} (qp_{win}-1)/(q-1)\, .\] 
\end{definition}
With this definition we have $ p_{win}= \frac{1}{q}+ \frac{q-1}{q}E$. Note the bias parameter quantifies by how much a strategy outperforms the trivial random strategy which for \textsf{q-XOR} games achieves the expected winning probability of $1/q$. Hence, we  have $E\geq 0$ for the optimal strategy of any \textsf{q-XOR} game.

\begin{remark}
We note that the notion of bias can also be defined in the same way in the more general setting of unique games with alphabet size $q$. However, we shall not need to work in this general setting for the purposes of this work and we do not require the definition of what a  unique game is. However, the reader familiar with the terminology should note that all \textsf{q-XOR} games, and in particular $\mathsf{CHSH_q}$ games, are unique games.
\end{remark}

\section[Information Causality Approach] {Overview of Information Causality approach}\label{sec:expos}
Here we discuss the basic method that allows one to use information causality type theorems to prove bounds on the value of $\sf{CHSH_q}$ game. Since more precise arguments with better bounds are provided in Section \ref{sec:upperbound} and Appendix \ref{app:nonlocal}, we omit some of the details. For simplicity, we work in the fully independent setting of the information causality game; this has the added benefit of clarifying why we need the strengthened version of the information causality principle, Theorem \ref{IC_st}, in order to prove Theorem \ref{main_thm}. In this section, for the ease of notation, we use a slightly different convention from other sections by taking Alice's input to be indexed from $0$ to $N-1$, as opposed to $1$ to $N$. 

The idea is best captured in the noiseless setting: assume that we have a (quantum or whatever) strategy that allows us to win the $\sf CHSH_q$ game with probability $1$. Let us see that this leads to some unlikely consequences.
\begin{figure}[h]
\centerline{
\includegraphics[width=0.48\textwidth]{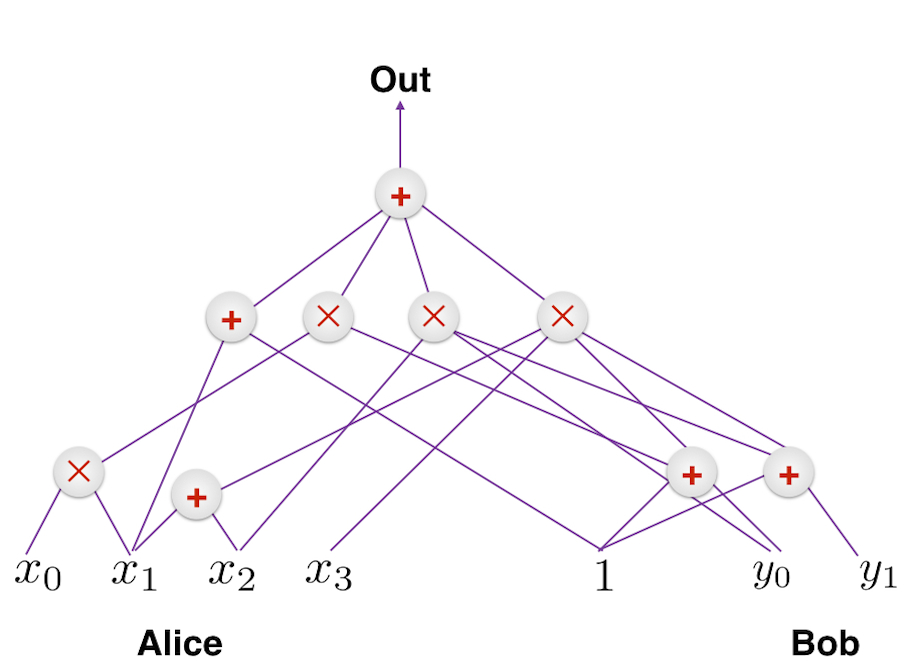}
\hspace{1pt}
\includegraphics[width=0.48\textwidth]{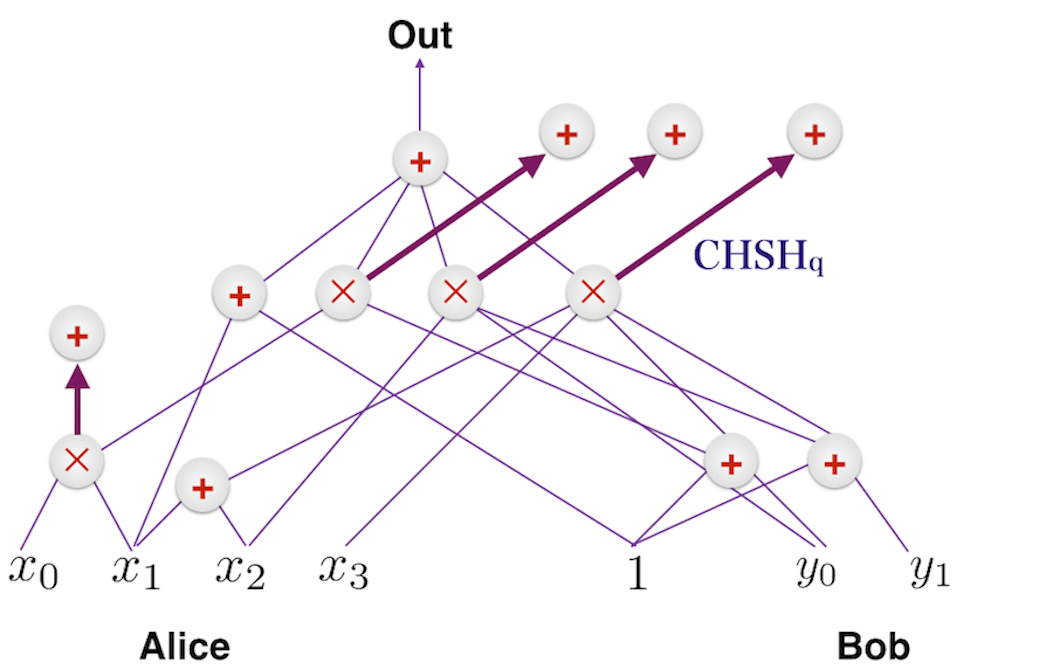}
}
\caption{\small The players' strategies for  $\mathsf{CHSH_q}$ can be used to \emph{eliminate} $\times$ gates from a distributed circuit, replacing them with $+$ gates---at the cost of some noise.}
\end{figure}

Let $q=2$ for simplicity. Assume Alice is given $X_0, X_1, X_2, X_3\in \F_2$ uniformly at random. Bob is given an index pointing to one of the Alice's inputs in the form of $b=b_0b_1$ with $b_i\in \{0,1\}$. The main observation is that the desired output of Bob, which is $X_b$, can be written as a polynomial over $\F_2$ in terms of $X_i$'s and $b_j$'s. Let us denote the desired output of Bob by $g_2(\cdot)= X_b$. In general $g_k$ denote the polynomial that Alice and Bob want to compute in the ``k\ts{th} level"  of the game, i.e., the setting where we have $N=2^k-1$ and $b=b_0b_1\ldots b_{k-1}$. Observe that
\[  g_2(X_0, X_1, X_2, X_3, b_0, b_1)= b_0 b_1 X_3 + (1-b_0) b_1 X_2 + b_0 (1-b_1) X_1 + (1-b_0) (1-b_1) X_0\, . \]
From the above, it is clear what Alice and Bob should do: given (say) the first monomial, $b_0b_1 X_3$, Alice and Bob could use their perfect $\sf CHSH$ strategy to produce $u_3+v_3= b_0b_1 X_3$. Doing the same thing for all other monomials, they essentially dispose of all multiplications and are left with
\[ g_2(X_0, X_1, X_2, X_3, b_0, b_1)=\sum_{i=0}^3 u_i + \sum_{i=0}^3 v_i \, .\]
Hence, if Alice sends $\sum_{i=0}^3 u_i$ to Bob, using just one bit of communication, Alice and Bob manage to compute $g_2(\cdot)$ successfully. It is clear that the above perfect strategy can be extended in the same manner to arbitrary ``level $k$," i.e. when Alice is given $\{X_i\}_{i=0}^{2^{k}-1}$ and Bob an index $b=b_0b_1\ldots b_{k-1}$, with no need for more than $1$-bit of communication. Hence, it seems that there must be some information theoretic impossibility here,  as one bit of communication should not be sufficient to allow two players to compute the rather complicated address function $g_k(\cdot)$ for large $k$. This is exactly the main idea of the information theoretic approach for Theorem \ref{main_thm}.

To prove Theorem \ref{main_thm} we need a more careful version of the above argument in order to derive a contradiction even when Alice and Bob only share a \emph{noisy strategy} for $\sf{CHSH}$ or $\sf{CHSH_q}$. Let us give some more detail: imagine Alice and Bob now only have a strategy that allows them to win with probability $w<1$. Instead of being as generous as before, Alice and Bob should now try to minimize the number of times they use their strategy for $\sf{CHSH}$ to transform a multiplication into an addition. Naively, if there exists a circuit with $m$ multiplication gates for computing the polynomial $g_k(\cdot)$, one would expect the winning probability of Alice and Bob to scale like $w^m$. As one increases $k$, if Alice's and Bob's probability of success in computing $g_k(\cdot)$ deteriorates slowly enough compared to the rate of increase in the information complexity of computing $g_k(\cdot)$, this gives us our desired contradiction.

The above illustrates the basic strategy for proving upper bounds on winning probabilities of games, in our case $\sf CHSH_q$, using information theoretic methods. The above analysis can be improved in several ways; however even with all improvements we are unable to prove Theorem \ref{main_thm} using just basic information causality result of \cite{IC_main}. The main difficulty here stems from the situation at the first level, and this difficulty in our view is one of the reasons why the method of Wang~\cite{wangfunc} and our own initial method based on the basic Information Causality theorem~\cite{IC_main} seemed incapable of reproducing the bounds in Theorem \ref{main_thm}.\footnote{The other major difference between our work and that of Wang in our view is the use of \textsl{Regularization Lemma} \ref{reg_lem} which greatly simplifies our analysis by reducing the number of parameters we have to keep track of in Proposition \ref{prop_easy} to the bare minimum.}  To demonstrate this, assume $q=3$. Let $X_0,X_1, X_2,b_0\in \F_3$ and let $h_1(\cdot)$ be the polynomial over $\F_3$ which Alice and Bob would like to compute at level $1$. We have

\begin{align*} h_1(X_0,X_1, X_2, b_0) &= (1-b_0^2) X_0 + (1 - (b_0-1)^2) X_1 + (1- (b_0 +1)^2) X_2 \\
&= X_0+ (X_2-X_1) b_0 - (X_0+X_1+X_2) b_0^2 \, . \label{rr}
\end{align*}
We see that the number of multiplications we have at level $1$ for $\F_3$ is $2$; on the other hand, over $\F_2$ we have $g_1(X_0, X_2, b_0)= X_0 + (X_0+X_1)b_0$. The increase in the number of multiplication gates per level is the main problem which forces us to use Theorem \ref{IC_st} as opposed to the fully independent version. To see the advantage of Theorem \ref{IC_st}, suppose that we choose Alice's input uniformly at random from the subspace $X_0+X_1+X_2=0$. Notice that although $X_i$'s are now correlated, they are still pairwise independent.  The main advantage of taking $X_i$'s from subspace $X_0+X_1+X_2=0$ is that this allows to save one multiplication in the evaluation of $h_1$ since 
\[ h_1(X_0,X_1,X_2,b_0)= X_0 +(X_2-X_1) b_0 \, \]
in this case. The above method indeed can be extended to higher values of $k$ to prove Theorem \ref{IC_st}. Essentially, the idea is to take Alice's input uniformly at random from an appropriately chosen subspace of the vector space (over $\F_3$) spanned by $\{X_i\}_{i=0}^{3^k-1}$. However, using the full strength of Theorem \ref{IC_st}  we can actually give a simpler information theoretic proof of Theorem \ref{main_thm}, as presented in Section \ref{IC_approach_sec}.

\section[Generalized Tsirelson bounds]{Generalized Tsirelson bound for $\bf\mathsf{CHSH_q}$}\label{sec:upperbound}
The aim of this section is to prove Theorem \ref{main_thm}. We have two different proofs of this result. Here we just present the proof based on the pairwise independent Information Causality principle, Theorem \ref{IC_st}. The second proof, which is based on a large alphabet generalization of result of Linden et al.~\cite{nonlocal_comp}, is presented in Appendix \ref{app:nonlocal}. 
In both proofs of Theorem \ref{main_thm}, we can substantially simplify the arguments by assuming the optimal strategy $\mathcal P$ for $\bf\mathsf{CHSH_q}$ always produces different types of error with equal probability. This is formalized as follows. 
\begin{definition}
A classical or quantum strategy for $ \bf{\sf{CHSH_q}}$ is called \emph{regular} if the following holds:
\[ \Pr_{a,b\leftarrow\mathcal{P^*}}[ a+b= xy+ k|x,y]= \frac{1}{q}- \frac{E}{q}\qquad \forall\, k\in \F_q^* .\]
Here, $a,b$ are the outputs of Alice and Bob's strategies given $x,y$ as inputs, respectively. The symbol $a,b\leftarrow \mathcal P$ means that the outputs $a,b$ of Alice and Bob's strategies are produced via the protocol $\mathcal P$ (given $x,y$ as inputs). The symbol $E$ as usual denotes the bias of the game defined as in Definition \ref{reg_def}. In some occasions, for the ease of notation, we do not write out $a,b\leftarrow \mathcal P$ fully, and instead simply write $\mathcal P$ to represent the fact that the players follow the particular strategy $\mathcal P$ for producing their outputs.
\end{definition}

It turns out that we can without loss of generality assume any protocol $\mathcal P$ for $\bf \sf CHSH_q$ is regular which is the content of the following lemma.
 \begin{lemma}[Regularization Lemma] \label{reg_lem}
Given any protocol $\mathcal P$ for $\bf \sf{CHSH_q}$, there exists a generic method to obtain a regular protocol $\mathcal P^*$ from $\mathcal P$ without changing the winning probability. 
\end{lemma}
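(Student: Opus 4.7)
The plan is to symmetrize $\mathcal P$ by wrapping it with two layers of shared randomness: additive shifts of its \emph{internal} inputs, to average out the dependence of the error distribution on the external inputs $(x,y)$, together with a multiplicative rescaling, to symmetrize the error distribution across the nonzero elements of $\F_q$. Both ingredients seem necessary: additive shifts alone average the error over inputs but do not symmetrize it across nonzero error values $k$, while a naive rescaling of the outputs by itself does not preserve the winning event, since neither party knows the other's input. Because the construction only adds public coins on top of $\mathcal P$, it applies uniformly to classical and quantum strategies.

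Concretely, I would let Alice and Bob share $r,s \in \F_q$ and $d \in \F_q^*$, independent and uniform. On respective inputs $x$ and $y$, they run $\mathcal P$ on the internal inputs
\[ x' \;:=\; x/d + r, \qquad y' \;:=\; y + s, \]
to obtain $(a,b)$, and they then output
\[ a^* \;=\; d a - xs - drs, \qquad b^* \;=\; d b - d r y, \]
each of which each player can compute from their own input together with the shared randomness alone. Writing $e := a + b - x'y'$ for the error of $\mathcal P$ on its internal input, the expansion $x'y' = xy/d + xs/d + ry + rs$ yields the one-line identity
\[ a^* + b^* \;=\; d(a+b) - xs - dry - drs \;=\; d\,x'y' + de - xs - dry - drs \;=\; xy + de. \]

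From here the regularity of $\mathcal P^*$ follows directly. For any fixed external input $(x,y)$, as $(r,s,d)$ varies the pair $(x',y')$ is uniform on $\F_q^2$ and independent of $d$, so the conditional distribution of $e$ given $(x,y)$ coincides with the global error distribution of $\mathcal P$; in particular $\Pr[e = 0 \mid x,y]$ equals the overall winning probability $p_{\text{win}}$ of $\mathcal P$. The new winning event $\{a^* + b^* = xy\} = \{de = 0\} = \{e = 0\}$ therefore has probability $p_{\text{win}}$ independent of $(x,y)$, so the winning probability is preserved. For $k \in \F_q^*$, uniformity of $d$ gives
\[ \Pr[a^* + b^* = xy + k \mid x,y] \;=\; \frac{1}{q-1} \sum_{d' \in \F_q^*} \Pr[e = k/d'] \;=\; \frac{1 - p_{\text{win}}}{q-1} \;=\; \frac{1}{q} - \frac{E}{q}, \]
which matches the regularity condition from Definition \ref{reg_def}.

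The only delicate point in the argument is engineering the construction so that both randomizations can be applied locally: Alice rescales her internal input by $1/d$ while both parties rescale their outputs by $d$, and the cross-term $xs + ry + rs$ arising in the expansion of $(x/d + r)(y + s)$ must be split so that each monomial is known to exactly one party (Alice absorbs $xs$ and $drs$, which she knows from $x, s, d, r$; Bob absorbs $dry$, which he knows from $y, d, r$). Once the split is chosen correctly, the whole verification reduces to the algebraic identity displayed above.
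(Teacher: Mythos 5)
Your proposal is correct and takes essentially the same approach as the paper: an affine rerandomization of the internal inputs via shared randomness to make the error distribution input-independent, combined with a multiplicative unit that the parties can locally undo so that the error of $\mathcal P$ gets multiplied by a uniform element of $\F_q^*$, symmetrizing it over the nonzero values. The only (immaterial) difference is that the paper uses two independent units $\alpha,\beta\in\F_q^*$, one per party, and divides the corrected outputs by $\alpha\beta$, whereas you use a single unit $d$ applied to Alice's input and multiply the outputs by $d$.
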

\begin{proof}
Given any quantum strategy $\mathcal P$ for $\bf\mathsf{CHSH _q}$ game, define its regularized version $\mathcal P^*$ as follows: On inputs $x$ and $y$,  $A$ and $B$ use shared randomness to agree upon $\alpha,\beta\in \F_q^*$ and $\gamma,\delta\in \F_q$ uniformly at random. Then, they follow the original strategy on inputs $\wtilde{x}=(\alpha x+\gamma)$, $\wtilde{y}=(\beta y+\delta)$. Let $\wtilde{a}$ and $\wtilde{b}$ be their outputs here.  Finally, $A$ outputs $a=\frac{1}{\alpha\beta}(\wtilde{a}-\delta \alpha x-\gamma\delta)$ and $B$ outputs $\frac{1}{\alpha\beta}(\wtilde{b}-\beta\gamma y)$. We show $\mathcal P^*$ satisfies the properties we wanted.

First, notice that $\mathcal P^*$ has the same winning probability as $\mathcal P$ since the input distribution remains uniform on $\F_q\times \F_q$, i.e. $\Pr_{\mathcal P^*}[a+b=xy|x,y]= \Pr_{\mathcal P, x',y'}[a'+b'=x'y'] =p_{win}\,.$ 
It is not too hard to see that
\[ \Pr_{a,b \leftarrow\mathcal P^*}[a+b=xy+s|x,y]=\Pr_{a',b'\leftarrow\mathcal P, x',y'} [a'+b'=x'y'+s\, \alpha\beta ] \, . \]
The key here is that quadruple $(x',y',\alpha,\beta)$ have a uniform product distribution over its domain $(\F_q)^2\times(\F_q^*)^2$. For $s\in \F_q^*$, we have that $s\alpha\beta$ is uniformly distributed over $\F_q^*$. This and the fact that the input $(x',y')$ is uniform on $\F_q\times \F_q$ imply the regularity property.
\end{proof}

\subsection{Reduction to pairwise independent information causality}\label{IC_approach_sec}
Let $m$ be a positive integer which will be a parameter taken to be sufficiently large in our proof. We want to instantiate Theorem \ref{IC_st} by a subcode of generalized Hadamard code over $\F_q$. Let us recall the definition of this code:
\begin{definition} The generalized Hadamard code corresponds to an $m$-dimensional subspace of a $q^{m}-1$ dimensional vector space over $\F_q$. Given a seed $\mathbf{Y}=(Y_1,Y_2,\ldots, Y_m)\in \F_q^m$ we have a coordinate per each $\xi \in \F^{m}_q \setminus \{0\}$  defined by 
\[ {\rm{Had}}_{\xi}= \xi_1 Y_1 + \xi_2 Y_2 + \ldots + \xi_m Y_m \, .\]
Hence, a codeword is a point in the above subspace and it is given by the list of $q^{m}-1$ coordinates defined as above. 
\end{definition}
The overall plan is to let Alice's input be a random codeword from $\{ {\rm Had}_\xi \}_{\xi \in \F_q^m \setminus \{0\} }$, and Bob's input to be some $\xi \in \F_q^m \setminus \{0\}$ which is an index to one of the coordinates of Alice's input. This, however, does not quite work as the generalized Hadamard codeword is not pairwise independent. To fix this issue, we instead take Alice's input to be a proper subset $U_m\subseteq \F_q^m \setminus \{0\}$ such that $\{ \rm Had_\xi \}_{\xi\in U_m}$ is pairwise independent. Concretely, we take $U_m$ to consist of $a \in \F_q^m\setminus \{0\}$ with their first non-zero coordinate equal to $1$. \footnote{This is chosen such that the map $\pi: (\xi_1, \xi_2, \ldots, \xi_m) \mapsto (\xi_1: \xi_2 : \ldots: \xi_m)$ injects onto $\mathbb{P}\F_q^{m-1}$} With this setup we have $n=\frac{q^m-1}{q-1}$, Bob's input is some $b\in U_m$ and Alice's input is a uniformly random word from $\{ {\rm Had}_\xi \}_{\xi \in U_m}$. 

Now we need a proposition.

\begin{proposition}\label{prop_easy} Let $\mathcal P$ be a regular protocol for $\mathsf{CHSH_q}$ with bias $E$. Assume Alice is given $(c_1, c_2, \ldots, c_m)\in \F_q^m$ and Bob $(d_1, d_2, \ldots, d_m)\in \F_q^m$. Assume Alice and Bob use the protocol $\mathcal P$ once per input pair $\{(c_k, d_k)\}_{k=1}^m$ to produce $\{a_k\}_{k=1}^m$ and $\{b_k\}_{k=1}^m$, i.e. using $(a_k,b_k)\leftarrow \mathcal P(c_k,d_k)$. Let $Z=\sum_{i=1}^m a_k + \sum_{i=1}^m b_k$. We have
\[ \Pr\left[ Z= \sum_{i=1}^m c_i d_i\right]= \frac{1}{q} + \frac{q-1}{q} E^m \, .\]
Also, for all $e\in \F_q^*$ we have
\[ \Pr\left[ Z= \sum_{i=1}^m c_i d_i + e\right]= \frac{1}{q} - \frac{1}{q} E^m\, . \]
\end{proposition}
\begin{proof}
The proof is by induction on $m$. For $m=1$ this is clear. Assume the result for $m-1$. Notice that for $Z=\sum_{i=1}^m c_i d_i$ to occur, it must be the case that the error in level $m-1$, i.e. $\sum_{i=1}^{m-1} c_i d_i-a_i-b_i$, exactly cancels out the error occurred in the last step, which is $a_m+b_m- c_m d_m$.  Now by the assumptions,
\begin{align*}
 Pr\left[ Z= \sum_{i=1}^m c_i d_i\right] &= \left(\frac{1}{q} + \frac{q-1}{q} E^{m-1}\right) \left(\frac{1}{q} + \frac{q-1}{q} E \right) \\
 &+(q-1) \left(\frac{1}{q} -\frac{1}{q} E^{m-1}\right)  \left(\frac{1}{q} - \frac{1}{q} E\right) \\
& = \frac{1}{q} + \frac{q-1}{q} E^m\, .
\end{align*}
This finishes the first claim of the theorem. The second claim of the theorem follows from the first one and the symmetry. 
\end{proof}
We apply this proposition to the setting where Alice is given $\{ {\rm Had}_\xi \}_{\xi \in U_m }$ and Bob $\xi^* \in U_m$. Alice and Bob want to compute ${\rm Had}_{\xi^*}= \xi^*_1 Y_1 + \xi^*_2 Y_2 +\ldots + \xi^*_m Y_m$. Here, $Y_i$'s are given to Alice as part of her inputs as any coordinate of the form $(0,0, \ldots, 1, 0, \ldots, 0)$ is in $U_m$ and $\xi^*$ is precisely the input to Bob. Hence by following the above protocol, Alice ends up sending $\sum_{i=1}^m a_i$ to Bob as the message and Bob outputs $Z=\sum_{i=1}^m (a_i+b_i)$. The resulting output $Z$ will satisfy the following:
\begin{align*}
I(X_{\xi};Z |b=\xi) &= (\frac{1}{q} +\frac{q-1}{q}E^m) \log_2(1+ (q-1)E^m)\\
&+  \frac{q-1}{q^2} (1-E^m) \log_2( 1- E^m) \, .
\end{align*}
Notice that the calculation above was not too hard because the regularity guaranteed from Proposition \ref{prop_easy} specifies the exact joint distribution of $(X_\xi,Z)$. Now we take $m$ large enough such that $E^m \ll \frac{1}{q}$. In this regime, it is easy to see that above expression is always larger than $ \frac{E^{2m}}{\poly(q)}$ for a fixed polynomial independent of $m$. Using  $I(X_{\xi};Z |b=\xi)  \geq \frac{E^{2m}}{\poly(q)}$
in Theorem \ref{IC_st}, and noticing that $|U_m|= (q^m-1)/(q-1)$, we see that $\frac{q^m-1}{q-1}\frac{E^{2m}}{\poly(q)}= O_q(1) \, .$
For this to hold for arbitrarily large $m$, we must have
\[E\leq \frac{1}{\sqrt{q}},\]
which is our desired result. 

\section[Classical Aspects]{Classical aspects of $\bf\mathsf{CHSH_q}$ and point-line incidences }\label{sec:classical}
In this section, we present our results regarding the classical value of $\sf{CHSH_q}$. This includes Theorem \ref{main_thm_classical} and various other results. 

We begin by a short introduction to some notions from geometric incidence theory. Let $\sc{\Pi}=\F^2$ be the plane over a field $\F$. For a collection of lines $L$ and points $P$ over a plane $\sc{\Pi}$ we define the set of incidences as 
\[ I(P,L)= \{ (p, l) \in P\times L \, , \, p\in l \} \, . \]
A central question in geometric incidence theory is the following: Given $|P|$ and $|L|$, what can be said about the size $| I(P,L)|$? If $|P|$ and $|L|$ are of roughly the same size, it is hard to imagine a configuration where every line in $L$ would contain every point in $P$. Hence $|I(P,L)|\leq |P| |L|$ seems a rather pessimistic upper bound. In fact, using the fact that there is at most one line through two distinct points and one point at the intersection of two distinct lines suffices to get a better upper bound of $|P|^{3/4}|L|^{3/4}+ |P| +|L|$.  As shown by Szemer\'{e}di and Trotter \cite{ST1983} the above bound can be improved when  $\F=\R$ to $|P|^{2/3} |L|^{2/3} + |P|+|L|$. The proof of this result is very geometric and relies on localization techniques that do not work in the finite field settings. The situation over finite fields remained unclear until about a decade ago; finally Bourgain, Katz and Tao \cite{BKT} used tools from arithmetic combinatorics to show an improved upper bound on $|I(P,L)|$ as long as the sets $P$ and $L$ are not too large. More specifically we have:
\begin{theorem}[BKT] \label{BKT}
For any $\delta>0$, there exists some $\epsilon>0$ such that for any prime field $\F_p$ and any collection of points $P$ and lines $L$ over $\F_p^2$ satisfying $p^\delta \leq |P|, |L| \leq p^{2-\delta}$, we have 
\[ | I(P, L)|= O\left( |P|^{\frac{3}{4}-\epsilon}|L|^{\frac{3}{4}-\epsilon} \right)\, . \]
\end{theorem}
Although the tƒheorem of Bourgain et al.~as stated above only holds for prime fields, it is not too hard to see that essentially the same argument goes through whenever $|P|, |L|$ are large compared to the proper subfields of $\F_q$. This is made explicit in the work of Jones \cite{Jones_2010, Jones_thesis}. Although what is proved in \cite{Jones_2010, Jones_thesis} is more general, we just need the following corollary.

\begin{cor}[Jones]\label{jones_cor}
Let $q$ be an odd power of a prime, and assume $P$ and $L$ are sets of points and lines in $\F_q^2$ of size $\Theta(q)$. There exists a universal constant $\epsilon>0$ such that 
\[   I(P,L) \leq q^{\frac{3}{2}-\epsilon} \, .\]
\end{cor}
\subsection{From point-line incidences to $\sf{CHSH_q}$}

Given the above results, Theorem \ref{main_thm_classical} can be proved rather quickly. To see why, recall Fact \ref{important_fact} from the introduction where it was claimed that the winning probability of any classical strategy for $\sf CHSH_q$ corresponds to the number of point-line incidences among $q$ lines and $q$ points in $\F_q^2$ under some restrictions on the points and lines.  
Hence, the result of Jones immediately implies the lower bound in Theorem \ref{main_thm_classical}. For the upper bound, we need the following lemma which allows to relax the restrictions on the points and lines in Fact \ref{important_fact}. 
 
\begin{lemma}\label{classical:lemma} Let $P, L$ be a set of points and lines in $\F_q^2$ with $|P|= \Theta(q)$ and $|L|=\Theta(q)$. There exists a set of points $P'$, and a set of lines $L'$, satisfying the conditions of Fact \ref{important_fact} with $|P'|\leq |P|$ and $|L'|\leq |L|$ such that  \[ |I(P', L')|= \Omega \left( | I(P,L)| \right)\, . \] 
\end{lemma}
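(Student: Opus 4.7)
The plan is to pass to the projective plane $\mathbb{P}\F_q^2$ and exploit the fact that the restrictions in Fact~\ref{important_fact} amount to nothing more than the choice of an affine chart: a projective line $\ell\subset\mathbb{P}\F_q^2$ to act as the ``line at infinity'' together with a point $v\in\ell$ to act as the ``vertical direction''. Under this identification the four requirements on $(P',L')$ become (i) $P'\cap\ell=\emptyset$, (ii) no line of $L'$ passes through $v$, (iii) the points of $P'$ lie in pairwise distinct $v$-pencils (distinct ``columns''), and (iv) the lines of $L'$ meet $\ell$ at pairwise distinct points, none equal to $v$ (distinct ``slopes''). I would try to produce such an $(\ell,v)$ by a random construction and show that the surviving sub-configuration retains $\Omega(|I(P,L)|)$ incidences.

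In order to decouple the union bounds to come, I would begin with a mild sub-sampling. Writing $|P|,|L|\le Cq$ for an absolute constant $C$, pick uniformly random subsets $P_1\subseteq P$ and $L_1\subseteq L$ of sizes $\min(|P|,\lfloor\epsilon q\rfloor)$ and $\min(|L|,\lfloor\epsilon q\rfloor)$, where $\epsilon>0$ is a small absolute constant to be chosen below. Linearity gives $\Ex[|I(P_1,L_1)|]\ge(\epsilon/C)^2|I(P,L)|$, so the probabilistic method produces a deterministic $(P_1,L_1)$ with $|P_1|,|L_1|\le\epsilon q$ and $|I(P_1,L_1)|=\Omega(|I(P,L)|)$.

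Next, draw a uniformly random flag $(\ell,v)$ of $\mathbb{P}\F_q^2$ (so $\ell$ is uniform over the $N=q^2+q+1$ projective lines, $v$ is uniform over the $q+1$ points of $\ell$, and both marginals are uniform on their natural universes). Define
\[
P''=\{p\in P_1:\,p\notin\ell\text{ and }v\notin\overline{pp'}\text{ for every }p'\in P_1\setminus\{p\}\},
\]
\[
L''=\{l\in L_1:\,v\notin l\text{ and }l\cap l'\notin\ell\text{ for every }l'\in L_1\setminus\{l\}\}.
\]
By construction $(P'',L'')$ automatically meets (i)--(iv) and $|P''|\le|P|$, $|L''|\le|L|$. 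For a fixed incidence $(p,l)\in I(P_1,L_1)$, a union bound over the four failure events $\{p\in\ell\}$, $\{v\in l\}$, $\{\exists p'\in P_1\setminus\{p\}\colon v\in\overline{pp'}\}$, $\{\exists l'\in L_1\setminus\{l\}\colon l\cap l'\in\ell\}$ yields total probability at most $(q+1)/N+(q+1)/N+|P_1|(q+1)/N+|L_1|(q+1)/N\le 2/q+2\epsilon$, where the individual $(q+1)/N$ factors come from the uniform marginals of $v$ on projective points and of $\ell$ on projective lines. Choosing $\epsilon$ a sufficiently small absolute constant makes this at most $1/2$, so $\Ex[|I(P'',L'')|]\ge|I(P_1,L_1)|/2=\Omega(|I(P,L)|)$, and one last invocation of the probabilistic method gives a deterministic choice of $(\ell,v)$ achieving the claimed bound.

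The main obstacle I anticipate is precisely the need to decouple: without sub-sampling, the two pencil-collision events have probability $\Theta(1)$ whenever $|P|,|L|=\Theta(q)$ and the union bound becomes vacuous. Sub-sampling down to $|P_1|,|L_1|\le\epsilon q$ drops each collision probability to $O(\epsilon)$ at the cost of only a constant factor in the incidence count, which is exactly what the lemma allows. A secondary bookkeeping point is that ``same column'' is intrinsically an affine notion, so my definition of $P''$, which treats any collision via $v$ as bad even when $p'\in\ell$, is stricter than strictly necessary; this is harmless since the analysis only upper-bounds the probability of the larger event.
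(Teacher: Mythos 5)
Your proof is correct and follows essentially the same route as the paper's: subsample to a small constant fraction of $q$, choose a random flag (line at infinity plus vertical point) in $\mathbb{P}\F_q^2$, and delete colliding points and lines, using a union bound over pairs to show a constant fraction of incidences survives. The only difference is cosmetic: you union-bound the four failure events per incidence directly, whereas the paper bounds the survival probability of each point and each line separately and then appeals to independence of the two deletion processes.
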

The main idea for proving this lemma is to start from the given configuration of $P$ and $L$, and apply a random projective transformation to them. The next step is to remove the lines with the same slope, and the points on the same vertical line. This ensures that the remaining sets of points and lines, $P'$ and $L'$, satisfy the required condition. It is not too hard to show that this deletion process shrinks the number of incidences only by a constant factor in expectation finishing the proof of Lemma \ref{classical:lemma}. Notice that one could not simply use affine transformations in the place of the projective ones as affine transformations preserve the direction of lines which is problematic if the initial configuration of $L$ has many parallel lines. A more detailed proof is given in Appendix \ref{lowerbound_classical}. 

Before proving Theorem \ref{main_thm_classical} we need to establish some useful notation.
\begin{definition}\label{line_def}
Let $\F_q^2= \F_q[z_1,z_2]$ be the plane. We denote by $\ell_{a,b}$ the line
\[ \ell_{a,b}=\{ (z_1,z_2)\in \F_q^2 : \, z_2=a\hspace{0.2pt}z_1 - b \} . \]
\end{definition}

\begin{proof}[Proof of Theorem \ref{main_thm_classical} and Fact \ref{important_fact}]
The optimal classical strategies for $\bf\mathsf{CHSH_q}$ are given by two functions $f,g:\F_q\rightarrow \F_q$ corresponding to Alice and Bob's strategies maximizing 
\[  \left| \{ x,y \in \F_q \, | \, f(x)+g(y)=xy \} \right | \, . \]
Given $f:\F_q\rightarrow \F_q$ corresponding to Alice's strategy, let $P$ be the collection of $q$  points of the form $(x,f(x))\in \F_q^2$. To Bob's strategy $g:\F_q\rightarrow \F_q$, we associate a collection $L$ of $q$ lines  $\{\ell_{y, g(y)}\}$. Observe in this language any pair $(x,y)\in \F_q^2$ satisfying $\bf{\mathsf{CHSH_q}}$ correspond to a point-line incidence. Hence, we have
\[ f(x) + g(y) = xy \quad \Leftrightarrow \quad (x, f(x)) \in \ell_{y, g(y)}\, ,  \]
which means that
\[ \sum_{x,y \in \F_q} 1_{f(x)+ g(y)= xy} = I(P,L). \]
The upper bound for $q=p^{2k-1}$ now follows from Corollary \ref{jones_cor}. \\
Now assume $q=p^{2k}$. Recall that in this case there exists a subfield $K\cong \F_{\sqrt{q}}$ such that $K\subset \F_q$. Let  $P= \{ (a,b)\in \F_q^2 \, : \; a,b\in K\}$ and $L= \{ \ell_{c,d} \, : \; c,d\in K\}$. Notice that $|P|=|L|=q$, and $|I(P,L)|=q^{3/2}$. Combined with Lemma \ref{classical:lemma}, this proves the lower bound.
\end{proof}

This framework can also be used to give an improved lower bound for general $q$'s. 

\begin{theorem}\label{lowerbound_general} There exists a strategy for $\bf\mathsf{CHSH_q}$ achieving a winning probability $\Omega(q^{-2/3})$.
\end{theorem}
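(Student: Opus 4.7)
The plan is to reduce the problem to a point-line incidence question in $\mathbb F_q^2$ and apply a finite-field version of the classical Szemer\'edi-Trotter extremal construction. By Fact~\ref{important_fact} and Lemma~\ref{classical:lemma}, it suffices to exhibit a configuration of $\Theta(q)$ points and $\Theta(q)$ lines in $\mathbb F_q^2$ with $\Omega(q^{4/3})$ incidences: Lemma~\ref{classical:lemma} then produces $(P',L')$ obeying the restrictions of Fact~\ref{important_fact} with $|I(P',L')|=\Omega(q^{4/3})$, and filling in $f$ and $g$ arbitrarily on the unused inputs turns this into a full $\sf CHSH_q$ strategy with winning probability $\Omega(q^{4/3})/q^2=\Omega(q^{-2/3})$.

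For $q$ prime, I would use the standard integer-grid extremal configuration. Set $n=\lfloor (q/2)^{1/3}\rfloor$ and, identifying $\{0,1,\ldots,2n^2-1\}$ with its image in $\mathbb F_q$ (injective since $2n^2<q$), take
\[
P=\{(i,j):0\leq i<n,\ 0\leq j<2n^2\}\subseteq\mathbb F_q^2,\qquad L=\{y=mx+b:0\leq m<n,\ 0\leq b<n^2\}.
\]
Because no modular wrap-around occurs when evaluating $mx+b$, each of the $n^3$ lines contains exactly $n$ points of $P$, giving $|I(P,L)|\geq n^4=\Theta(q^{4/3})$ while $|P|,|L|=\Theta(q)$, as required.

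For a prime power $q=p^k$, I would replace the integer substructure by the natural substructure of $\mathbb F_q$. When $\mathbb F_q$ contains a subfield $K$ of size $\Theta(q^{1/3})$ (for instance when $3\mid k$, take $K=\mathbb F_{p^{k/3}}$), choose a set $T\subseteq\mathbb F_q$ of $|K|$ representatives of distinct $K$-cosets and set $A_x=K$, $A_y=\bigcup_{t\in T}(K+t)$, $M=K$, $C=A_y$, $P=A_x\times A_y$, and $L=\{y=mx+c:m\in M,\,c\in C\}$. Then $|P|=|L|=|K|^3=\Theta(q)$, and for any $(m,c)$ with $c\in K+t$ the computation $mA_x+c=K+c\subseteq K+t\subseteq A_y$ shows that every line of $L$ contains $|A_x|=|K|$ points of $P$; hence $|I(P,L)|=q\cdot q^{1/3}=q^{4/3}$.

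The main technical obstacle is the case where $k$ has no divisor close to $k/3$---most notably $k$ prime and different from $3$---so that no subfield of size $\Theta(q^{1/3})$ exists. In this case I would fall back to an $\mathbb F_p$-linear variant of the construction above, taking $A_x$ to be an $\mathbb F_p$-subspace of $\mathbb F_q$, $M\subseteq\mathbb F_p$, and $A_y$ a union of $\mathbb F_p$-translates of $A_x$, and carefully choosing the dimension of $A_x$ to balance $|P|$ and $|L|$; the structural constraints are looser than in the clean subfield case, but the same order of incidence count is recovered, possibly by combining the subspace structure with the integer embedding of a prime subfield. Once the configuration is in hand, the remainder of the argument is simply the application of Lemma~\ref{classical:lemma} and the counting above.
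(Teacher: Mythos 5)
Your overall plan is the paper's plan: reduce to exhibiting $\Theta(q)$ points and $\Theta(q)$ lines in $\F_q^2$ with $\Omega(q^{4/3})$ incidences (the paper's Proposition \ref{inc_propos}), then pass through Lemma \ref{classical:lemma} and Fact \ref{important_fact} to get a strategy with winning probability $\Omega(q^{4/3})/q^2$. Your prime-field grid construction and your subfield construction (when $\F_q$ has a subfield of size $\Theta(q^{1/3})$, or of size $\sqrt{q}$ when $k$ is even) are correct and match the paper's.

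The gap is the case you yourself flag: $q=p^k$ with $k$ odd, $k\geq 5$, and no subfield of size between $q^{1/3}$ and $q^{1/2}$ (e.g.\ $k=5,7,11,\dots$). Your proposed fallback takes the slope set $M\subseteq\F_p$, and this provably cannot work: every line in your family meets $P=A_x\times A_y$ in at most $|A_x|$ points, and the number of lines is at most $|M|\,|C|\leq p\,|A_y|$, so the total incidence count is at most $p\,|A_x|\,|A_y|=p\,|P|=O(pq)$. For this to reach $\Omega(q^{4/3})$ one needs $p=\Omega(q^{1/3})=\Omega(p^{k/3})$, i.e.\ $k\leq 3$. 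The point is that you need $\Theta(q^{1/3})$ \emph{distinct slopes}, and they cannot all live in $\F_p$. The paper closes this by dropping the requirement that the slopes stabilize $A_x$ multiplicatively and asking only that they map it into the (larger) $y$-coordinate set: fixing a primitive element $g$, it takes $A=\F_p+g\F_p+\cdots+g^{a-1}\F_p$, $B=\F_p+\cdots+g^{b-1}\F_p$ with $a+b=s$, and the slope set $C=\F_p+\cdots+g^{b-a}\F_p$, so that $C\cdot A\subseteq B$ and hence every line $y=cx+d$ with $c\in C$, $d\in B$ contains all $|A|$ points of $A\times B$ over $A$. Tuning $b\approx 2s/3$ gives $|C|=p^{2b-s+1}=\Omega(q^{1/3})$ and $|B||C|=O(q)$, which is exactly the count you need; your phrase ``combining the subspace structure with the integer embedding of a prime subfield'' gestures at this, but as written your construction does not achieve it.
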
 
Let us note that the above lower bound was somewhat counterintuitive to us at first. The point is that we expected that the function $(x,y) \mapsto xy$ to be in some sense \textsl{maximally psuedorandom} against the function $(a,b)\mapsto a+b$. Given this, it was reasonable to assume that the best classical strategy for $\sf{CHSH_q}$ would achieve a winning probability of $\wtilde{O}(q^{-1})$ which is up to polylogarithmic factor the same as that of a random strategy. The logarithmic advantages can be seen to be achievable using a simple balls-and-bins analysis by taking a random function $f:\F_q\rightarrow \F_q$ as Alice's strategy, and optimizing Bob's strategy $g:\F_q\rightarrow \F_q$ given that of Alice. In fact, numerical experiments which looked for locally optimal solutions confirmed the above intuition.\footnote{More precisely, the algorithm used was the following: we start from two random strategies $f,g:\F_q\rightarrow \F_q$, and in every iteration we fix one of the functions and update the other one to the optimal strategy taking the other function as fixed.} Despite all this, Theorem \ref{lowerbound_general} states that much better lower bounds are achievable in general. 

\begin{proof}[Proof of Theorem \ref{lowerbound_general}]
This follows from Lemma  \ref{classical:lemma} applied to the next proposition.
\end{proof}

\begin{proposition}\label{inc_propos}
For any finite field field $\F_q$, there exists a set of at most $q$ lines and at most $q$ points over $\F_q^2$ with $\Omega(q^{4/3})$ incidences.
\end{proposition}
\begin{proof}
Let $q=p^s$. First we handle the case $s=1$, then we give a construction for $s\geq 2$. 
Since we are concerned with an asymptotic statement, we can assume $q$ is sufficiently large. As a result, we can safely ignore the ceiling and floor signs as they do not affect the asymptotic. 
For the prime case $s=1$, the construction is very simple: let $P=\left [q^{1/3} \right] \times \left [ q^{2/3}\right]$. Let $L$ be the collection of lines $\ell_{c,d}$ of the form $y=cx+d$ with $c\in [q^{1/3}/2]$ and $d\in [q^{2/3}/2]$. It is clear that this achieves $I(P,L)= \Theta(q^{4/3})$ with $|P|, |L|\leq q$. 

For the case $s\geq 2$, we choose our set $P$ to be a product set, $P=A\times B$ where $A,B\subset \F_q$ are both subspaces. Let $g$ be a primitive element of $\F_q$ so $\{1,g,g^2, \ldots, g^{s-1}\}$ form a basis of $\F_q$ as a vector space over $\F_p$.  
 
 Let $b\leq s$ be a positive integer, close to $2s/3$, to be specified later. Let $a= s - b$ (the condition $s\geq 2$ will turn out to be sufficient for $a\geq 1$ which we require). Define 
 \[ A= \F_p + g\, \F_p  + g^2 \F_p +\ldots + g^{a-1}  \F_p \]
 and 
 \[ B= \F_p + g \, \F_p + g^2 \F_p + \ldots + g^{b-1}  \F_p \]
 and 
 \[ C=\F_p + g\, \F_p + g^2\F_p +\ldots + g^{b-a} \F_p \, .\]
Notice that $|P|= |A| |B|= q$. 

Let $\ell_{c,d}\subset \F_q^2$ to be the line corresponding to $\{ (x,y)\in \F_q^2 :\, y=cx+d\}.$ Define 
\[ L= \{ \ell_{c,d}: \, c\in C , d \in B \} \, .\]
Given this we can see 
\[ I(P,L)= |A|\hsp |B|\hsp  |C|\quad, \quad |L|= |B| \hsp |C|\, . \]
We want $|L|= O(q)$ while $I(P,L)= \Omega(q^{4/3})$. Since $|A||B|=q$, it suffices to choose $b$ such that
\[ |C|=p^{2b-s+1}= \Omega( p^{\frac{s}{3}}) \;\; , \:\; |B| \hsp |C|= p^{3b-s+1} = O(p^s) \, .\]
Now if $s\; \rm{mod}\; 3=2$ then  
\[ \Z \cap \left[\frac{2s}{3} -\frac{1}{2}, \frac{2s}{3} -\frac{1}{3}\right]\neq \emptyset . \]
Hence, we are done by taking $b$ to be the integer in that interval. For,  $s=3k$, $s=3k+1$ we take $b=2k$ and $b=2k+1$ respectively. In $s=3k$ case, we have $|I(P,L)=p^{4k+1}$ and $|L|=p^{3k+1}$. The important thing is that although $|L|$ is larger than its desired size by a factor of $p$,  we are also exceeding the desired $|I(P,L)|$ lower bound by a factor $p$. A moment of though reveals that choosing $L'$ to be the subset of $L$ of size $p^{3k}$ with maximum number incidences will finish the proof in this case. The situation in  $s=3k+2$ case is analogous: if we choose $L'$ be the the subset of $L$ of size $p^{3k+1}$ with the maximum number of incident points from $P$ that will finish the construction.
\end{proof}

\section{Concluding remarks}
In this work, we initiated the study of $\bf\mathsf{CHSH_q}$ in the asymptotic setting. We developed the theory of both quantum and classical values of this family of games, and outlined the connection to the problem of point-line incidences over finite fields. The fact that $\sf CHSH_q$ is  a natural problem to consider in the study of non-{\sf XOR} games (which is the original motivation of our work as well as Burhman and Massar's) while exhibiting intimate connections to above mathematical topics indicates that this problem deserves further investigation in the future. This is especially boosted by the fact that guaranteed progress can be made by using better numerical methods to investigate higher values of $q$, and also by attempting to quantize the results in arithmetic combinatorics. An investigation of the extent to which the results in additive and arithmetic combinatorics quantize could certainly have much further impact beyond the problems considered here. 

One goal of our study was to further develop the techniques available for analyzing the entangled value of non-binary non-local games. We believe that by giving two rather different proofs of Theorem \ref{main_thm}, we demonstrated the power of the indirect approach of analyzing non-local games. 
\paragraph{Future directions.} As discussed previously, Problem \ref{problem:open} remains the most clear open problem given the bounds proved here. As mentioned before, its resolutions is likely to also resolve to  Kasher and Kempe's problem regarding the security of Bourgain's two-source extractor in the presence of entanglement\cite{kasher2012}. We can think of two possible routes for resolving this problem: one is by trying to quantize the arguments in the paper of Bourgain, Katz and Tao \cite{BKT}, and the other is by investigating the {\sf SDP} hierarchies of Navascu{\'e}s et al. \cite{navas2008} to see whether they could lead to any improvement to Theorem \ref{main_thm} or lead to tightness results via some rounding scheme. In the hierarchy approach, it might be useful to keep in mind the rounding scheme of Kempe et al. \cite{kempe2008} (though their result seem more relevant when the game value is close to $1$ which is not the case here). Currently, with some collaborators, we are pursuing the latter direction via the \textsf{SDP} hierarchies.


We finish by recounting the perhaps most intriguing (and rather open-ended) future direction. This is the question of the extent to which the relatively well-understood theory of $\sf XOR$ games extends to larger alphabets. A related question is to find a better explanation for the absence of any analogue of a large alphabet generalization of Tsirelson's theorem \cite{boris_2} for even slight variants of non-{\sf XOR} games (say {\sf q-XOR} games for $q=3$) in the literature. A better understanding of the above issues would certainly constitute a major advance in our understanding of two prover games and non-locality in general.
\bibliography{QIReferences}

\appendix

\section[App.~Info.~theory background]{Information theory background} \label{appendix:info_theory}
Here we give a brief account of some definition from Information Theory. We do not require any knowledge of quantum information theory as the proofs in Appendix \ref{IC_app} are all based on classical information theory. The only place where we use some quantum ideas (which we must as Theorem \ref{IC_st} is formulated  in the quantum setting), is in Theorem \ref{IC_int} which we borrow directly from Paw{\l}owski and Winter \cite{hyperbits}.

For a random variables $X$ over a domain $\mathcal X$, we define the entropy of $X$ as
\[ H(X)= \sum_{x\in \mathcal X} \Pr[X=x]  \cdot \log\left(\frac{1}{\Pr[X=x]}\right).\]
The entropy of a random variable $X$ conditioned on the value of random variable $Y$ is the average (according to $Y$) entropy of random variables $X| Y=y$, i.e.
\[ H(X|Y)= \sum_{y\in \mathcal Y} \Pr[Y=y] \cdot H(X|Y=y),\]
where $H(X|Y=y)= \sum_{x\in \mathcal X} \Pr[X=x|Y=y]\cdot \log\left(\frac{1}{\Pr[X=x|Y=y]}\right)$.

\begin{remark}
The above definition is equivalent with the more concise definition of $H(X|Y)=H(XY)-H(Y)$.
\end{remark}

Now we are ready to define and state the main property of \textsl{mutual information}.

\begin{definition}[Mutual Information] Given random variables $X$ and $Y$ we define their mutual information via
\[ I(X;Y)= H(X)+H(Y)-H(XY)= H(X)-H(X|Y)=H(Y)-H(Y|X).\]
\end{definition}
The main fact that we shall need regarding the mutual information is the following.
\begin{proposition}[Chain Rule]
For any random variables $X,Y,Z,W$ we have
\[ I(X;YZ|W)= I(X;Y|W)+ I(X;Z|Y,W).\]
\end{proposition}
For a more detailed introduction to information theory as well as the proof of the above claims, we recommend the early chapters of \cite{cover2012}.

\section[App.~ Pairwise independent IC]{Pairwise independent Information Causality}\label{IC_app}
In this section we prove Theorem \ref{IC_st} resolving the main open problem of \cite{hyperbits}. Our starting point is the original theorem of Paw{\l}owski and Winter which is the binary version of our result. 
 \begin{theorem}[PW]\label{IC_int}
 Assume in the information causality game of Definition \ref{IC_def}, Alice's input $\mathbf{X}=(X_1,X_2,\ldots, X_n)$ is drawn from an unbiased pairwise independent distribution. Assume Alice's message to Bob $\alpha$ and output of Bob $Z$ have both only two outcomes. We have
 \[ \sum_{i=0}^N I(X_i; Z |b=i)=O(1) \, .\]
 \end{theorem}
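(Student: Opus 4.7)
The plan is to establish the bound by a three-stage reduction: (i) convert each mutual information $I(X_i;Z|b=i)$ into a squared bias, (ii) express that bias as an operator/statistical-distance quantity on Bob's side, and (iii) use pairwise independence plus the constraint that $\alpha,Z\in\{0,1\}$ to bound the sum of squared biases by a constant via an orthogonality/variance argument. The structure mirrors the hyperbit/Bloch-sphere approach one expects from the binary setting, but I will phrase it classically and treat the quantum case by replacing random variables with conditional states in the obvious way.

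First I would reduce to biases. For each $i$, fix Bob's strategy when $b=i$; this produces a $\{0,1\}$-valued output $Z_i$. Let $\beta_i \eqdef \Pr[Z_i=X_i]-\tfrac12$. Since $X_i$ is unbiased and binary and $Z_i$ is binary, a direct Taylor expansion of the binary entropy gives $I(X_i;Z_i)=\Theta(\beta_i^2)$ in the regime $|\beta_i|\le \tfrac12$, and in any case $I(X_i;Z_i)\le c\,\beta_i^2$ for an absolute constant $c$ whenever $\beta_i$ is bounded away from $\tfrac12$ (and $I\le 1$ otherwise, so this case contributes at most a constant). Hence it suffices to prove $\sum_i \beta_i^2 = O(1)$.

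Second I would translate $\beta_i$ to an operator/statistical quantity on Bob's side. Let $\sigma_\mathbf{X}$ denote the joint state, after Alice's message, of Bob's quantum register together with the classical bit $\alpha$; this is a function of $\mathbf{X}$ only (through $\alpha$). Writing $\sigma_{X_i=0}$ and $\sigma_{X_i=1}$ for the averages of $\sigma_\mathbf{X}$ conditioned on $X_i=0$ and $X_i=1$, the optimal binary POVM gives
\begin{equation*}
|\beta_i| \;\le\; \tfrac14\bigl\|\sigma_{X_i=0}-\sigma_{X_i=1}\bigr\|_{\text{tr}}
\;\le\; C\,\bigl\|\Delta_i\bigr\|_{\text{HS}},
\end{equation*}
where $\Delta_i \eqdef \sigma_{X_i=0}-\sigma_{X_i=1}$ and the last step uses that $\sigma_\mathbf{X}$ lives in a fixed low-dimensional space (Bob's qubit registers together with the two-valued classical label $\alpha$), so trace norm and Hilbert--Schmidt norm are equivalent up to a universal constant.

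Third, and this is the crux, I would bound $\sum_i\|\Delta_i\|_{\text{HS}}^2$ using pairwise independence. Set $\bar\sigma \eqdef \Ex_\mathbf{X}\sigma_\mathbf{X}$ and consider the operator-valued ``Fourier coefficient at level one'',
\begin{equation*}
\widehat{\sigma}(i) \;\eqdef\; \Ex_\mathbf{X}\bigl[(-1)^{X_i}\,\sigma_\mathbf{X}\bigr]
\;=\; \tfrac12\bigl(\sigma_{X_i=0}-\sigma_{X_i=1}\bigr) \;=\; \tfrac12\Delta_i.
\end{equation*}
By unbiasedness and pairwise independence, $\Ex[(-1)^{X_i+X_j}]=\delta_{ij}$, which gives the Parseval-type inequality
\begin{equation*}
\sum_i \bigl\|\widehat{\sigma}(i)\bigr\|_{\text{HS}}^2
\;\le\; \Ex_\mathbf{X}\bigl\|\sigma_\mathbf{X}-\bar\sigma\bigr\|_{\text{HS}}^2
\end{equation*}
coordinatewise in an orthonormal basis of the operator space. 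Finally, because $\sigma_\mathbf{X}$ only differs from $\bar\sigma$ through the classical bit $\alpha\in\{0,1\}$ attached to a fixed-dimensional Bob register, the right-hand variance is uniformly bounded by an absolute constant (concretely, by $\|\sigma_{\alpha=0}-\sigma_{\alpha=1}\|_{\text{HS}}^2 = O(1)$). Chaining the three inequalities yields $\sum_i \beta_i^2 = O(1)$ and hence the theorem.

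The main obstacle will be Step~3: the Parseval-style bound is immediate for scalar-valued functions under a pairwise independent unbiased distribution, but I need it for \emph{operator}-valued functions and I must verify that the only ``source of variance'' in $\sigma_\mathbf{X}$ that can contribute is the two-valued message $\alpha$ (Bob's pre-measurement quantum state is fixed and independent of $\mathbf{X}$). Handling the biased regime $|\beta_i|$ close to $\tfrac12$ -- where the quadratic approximation for mutual information degrades -- is a minor nuisance that I would dispatch by noting that at most $O(1)$ indices can have such large bias, since each contributes $\Omega(1)$ to $\sum_i \beta_i^2$ which we have already bounded.
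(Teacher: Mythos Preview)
This theorem is not proved in the paper; it is quoted as the Paw{\l}owski--Winter result (reference \cite{hyperbits}) and used as a black box in the proof of Proposition~\ref{IC_prop} and Theorem~\ref{IC_st}. So there is no in-paper proof to compare your proposal against, and I will assess the proposal on its own.

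The Bessel-inequality skeleton in Step~3 is correct and is indeed the heart of the Paw{\l}owski--Winter argument: under an unbiased pairwise-independent distribution the characters $(-1)^{X_i}$ are orthonormal in $L^2(\mathbf X)$, so $\sum_i\|\Ex[(-1)^{X_i}u_{\mathbf X}]\|^2\le \Ex\|u_{\mathbf X}\|^2$ for any vector-valued $u_{\mathbf X}$. The difficulties are in Steps~1 and~2, and they are more serious than the obstacle you flag.

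\emph{Step~1 is false as written.} The bound $I(X_i;Z_i)\le c\,\beta_i^2$ does not hold for binary $X_i$ uniform and binary $Z_i$ unless $Z_i$ is close to unbiased. Take $\Pr[Z_i{=}1\mid X_i{=}0]=0$ and $\Pr[Z_i{=}1\mid X_i{=}1]=2\epsilon$; then $\beta_i=\epsilon$ while $I(X_i;Z_i)=h(\epsilon)-\tfrac12 h(2\epsilon)=\Theta(\epsilon)$. Nothing in the IC setup forces $Z_i$ to be unbiased, so the reduction to $\sum_i\beta_i^2$ is not justified. Your ``at most $O(1)$ bad indices'' patch does not help either, since in this regime $\beta_i$ is small, not large.

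\emph{Step~2's norm equivalence and the variance bound in Step~3 both fail for dimension reasons.} Only $\alpha$ and $Z$ are binary; Bob's quantum register is of unbounded dimension, so there is no universal constant relating $\|\cdot\|_{\rm tr}$ and $\|\cdot\|_{\rm HS}$ on the states $\sigma_{\mathbf X}$. Moreover, $\sigma_{\mathbf X}$ genuinely depends on $\mathbf X$ through Alice's steering of Bob's state, not only through the classical label $\alpha$; no-signaling fixes the $\alpha$-average $\sum_\alpha p_\alpha^{\mathbf X}\rho_\alpha^{\mathbf X}$ but not the joint state, so $\Ex_{\mathbf X}\|\sigma_{\mathbf X}-\bar\sigma\|_{\rm HS}^2$ is not an absolute constant.

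What rescues the argument in \cite{hyperbits} is precisely the hyperbit characterization you allude to: one classical bit plus unlimited entanglement is \emph{equivalent} to Alice transmitting a single unit vector $u_{\mathbf X}$ and Bob obtaining a $\pm1$ outcome with conditional mean $\langle u_{\mathbf X},v_i\rangle$. That equivalence is a nontrivial theorem (essentially a Tsirelson-type statement) and cannot be replaced by the ad hoc state $\sigma_{\mathbf X}$. Once one has it, the Bessel step gives $\sum_i(\Ex[(-1)^{X_i}Z_i])^2\le 1$ with no dimension or norm-conversion issues, and the passage to mutual information is handled separately. Your outline captures the Bessel step correctly but skips the structural input that makes Steps~1--2 go through.
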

 
The main idea for proving this theorem is to reduce the sizes of alphabets of Bob's output and Alice's message (to Bob)---while keeping a control of the information theoretic quantities of interest. We do this in two steps: first, we relax the assumption on the alphabet size of Bob's output $Z$, and then we relax the assumption on the alphabet size of the message $\alpha$. 
\begin{proposition}\label{IC_prop} Assume in the information causality game of Definition \ref{IC_def}, Alice's input $\mathbf{X}=(X_1,X_2,\ldots, X_n)$ is drawn from an unbiased pairwise independent distribution.  Assume Alice's message to Bob $\alpha$ has only two outcomes. Let $\Lambda$ be the alphabet  for Bob's output $Z$. Then,
\[ \sum_{i=0}^N I(X_i; Z |b=i)=O( |\Lambda|). \]
 \end{proposition}
 Note that what makes this alphabet reduction not so immediate is the fact that $I(X; Z_1 Z_2 \ldots Z_k)\geq m$ does not imply a lower bound on $\max_{i} \{ I(X; Z_i) \}$. The idea of alphabet reduction is as follows: We imagine Bob, after doing his measurement and getting some output $Z\in \Lambda$, performs a post-processing on $Z$ to produce a binary output $Z_{bin}$. This $Z_{bin}$ will be only a function of $Z$ and his input $b$, and will satisfy
 \[ I(X_i; Z_{bin} |b=i) \geq \frac{I(X_i; Z |b=i)}{|\Lambda|} .\]
This establishes the desired result since $\sum_{i=1}^n I(X_i; Z_{bin} |b=i)$ is at most a constant by Theorem \ref{IC_int}. In order, to define $Z_{bin}$ as a function of $Z$ and $b$ to satisfy the above property, we need the following lemma. 
 
 \begin{lemma} Let $X$ and $Y$ be two discrete random variables. Let $Y$ take values in some domain $\mathcal B$ and assume $X$ to be uniform on its domain $\mathcal A$. Then there exists a function $f:\mathcal B \rightarrow \{0,1\}$ such that
 \[ I(X; f(Y)) \geq \frac{ I(X;Y)}{|\mathcal B|} \, . \]
 
 \end{lemma}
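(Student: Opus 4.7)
The plan is the following direct construction. For each $y^\ast \in \mathcal B$, consider the binary function $f_{y^\ast}\colon \mathcal B \to \{0,1\}$ defined by $f_{y^\ast}(y) = \mathbf{1}[y = y^\ast]$. I will show that
\[
\sum_{y^\ast \in \mathcal B} I\bigl(X;\, f_{y^\ast}(Y)\bigr) \;\geq\; I(X;Y),
\]
after which the pigeonhole principle produces a $y^\ast \in \mathcal B$ with $I(X; f_{y^\ast}(Y)) \geq I(X;Y)/|\mathcal B|$, and this $f_{y^\ast}$ is the desired $f$.

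The summation bound is a one-line consequence of the divergence representation of mutual information, $I(X;Z) = \sum_z P(Z=z)\, D\bigl(P_{X \mid Z = z}\,\|\, P_X\bigr)$. Applied to $Z := f_{y^\ast}(Y)$, which takes only two values, this gives
\[
I\bigl(X; f_{y^\ast}(Y)\bigr) \;=\; P(Y = y^\ast)\, D\bigl(P_{X\mid Y = y^\ast}\,\|\,P_X\bigr) \;+\; P(Y \neq y^\ast)\, D\bigl(P_{X\mid Y \neq y^\ast}\,\|\,P_X\bigr).
\]
Summing over $y^\ast \in \mathcal B$, the first terms reassemble into $\sum_{y^\ast} P(Y = y^\ast) D(P_{X \mid Y = y^\ast} \| P_X)$, which is exactly the standard expansion of $I(X;Y)$; the remaining terms form a sum of nonnegative KL divergences and so can only increase the total. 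This yields the claimed inequality, and pigeonhole finishes the lemma.

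I do not anticipate a substantial obstacle here, as the whole argument collapses into an identity, a nonnegativity observation, and an averaging step. One point worth flagging is that this argument never actually invokes the uniformity of $X$ on $\mathcal A$; the hypothesis is stated in the lemma but is presumably only needed at the level of Proposition \ref{IC_prop}, where the function $f$ produced here is used to post-process Bob's output and must interact correctly with the pairwise-independence structure of Alice's inputs in the information causality game.
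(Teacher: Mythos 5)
Your proof is correct, and it arrives at the same construction as the paper (the indicator of a single well-chosen element of $\mathcal B$) by a slightly different and in fact more general route. The quantity you pigeonhole over, $P(Y=y^\ast)\,D(P_{X\mid Y=y^\ast}\,\|\,P_X)$, coincides with the quantity $r_j(1-t_j)H(X)$ that the paper pigeonholes over when $X$ is uniform, so the selected $y^\ast$ is the same; the difference lies in how the chosen indicator is certified. The paper works with entropies: it writes $H(X\mid f(Y))$ as a two-term average and bounds the term $H(X\mid Y\neq j^\ast)$ by $H(X)$, a step that genuinely needs uniformity of $X$, since conditioning on an event can otherwise push the entropy above $H(X)$. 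You instead use the divergence decomposition $I(X;f_{y^\ast}(Y)) = P(Y=y^\ast)\,D(P_{X\mid Y=y^\ast}\,\|\,P_X) + P(Y\neq y^\ast)\,D(P_{X\mid Y\neq y^\ast}\,\|\,P_X)$ and simply discard the second, nonnegative term before summing over $y^\ast$ and averaging. This buys a strictly stronger lemma at no extra cost: as you correctly flag, the uniformity hypothesis is never used, so with your argument the check "applicable since $X_i$ is unbiased" made when the lemma is invoked in the proof of Proposition \ref{IC_prop} becomes superfluous (uniformity of the $X_i$ is of course still needed elsewhere, via the pairwise independence hypothesis of Theorem \ref{IC_int}).
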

 \begin{proof}
 Let $s$ be the ratio of the mutual information between $X$ and $Y$, and the entropy of $X$.  
 \[ I(X;Y)= sH(X)\, \Rightarrow \, H(X|Y)= (1-s) H(X)\, .\] 
 For any $j\in \mathcal B$, define $r_j= \Pr[Y=j]$ and   $t_j= \frac{H(X|Y=j)}{ H(X)}$.
Notice that since $\sum_{j\in \mathcal B} r_j=1$ we have
 \[
H(X | Y)= (1-s)H(X)= \sum_{j\in \mathcal B} r_j t_j H(X)= H(X) - \sum_{j\in \mathcal B}(1- t_j) r_j H(X) \, .
\]
Hence, there exists $j^*\in \mathcal B$ such that $r_{j^*} (1-t_{j^*}) \geq \frac{s}{|\mathcal B|}$. Define the function $f:\mathcal B\rightarrow \{0,1\}$ to be everywhere zero expect at $j^*$. We have
\[
 H(X| f(Y) )= \Pr[f(Y)=0] \:H(X | Y\neq j^*) + \Pr[f(Y)=1] \: H(X| Y=j^*) \,. 
 \]
 Since $X$ is unbiased over its domain, the first term is bounded by $(1-r_{j^*}) H(X)$. The second term is equal to $ r_{j^*} t_{j^*} H(X)$. Adding these two terms and using the choice of $j^*$ it follows that
\[ 
H(X|f(Y))\leq H(X)- r_{j^*}(1- t_j^*) H(X)\leq \,  H(X) \left(1- \frac{s}{|\mathcal B| } \right) \, , 
 \]
 which using the definition mutual information is seen to be the desired result.
 \end{proof}
 To prove Proposition \ref{IC_prop}, for $i \in \Lambda$ we let $f_i: \Lambda \rightarrow \{0,1\}$ be given by the above lemma applied to $X_i$ and $Z$ which is applicable since $X_i$ is unbiased. We let $Z_{bin}= f_i(Z)$. By Theorem \ref{IC_int} it follows that 
 \[ \sum_{i=1}^n I(X_i ;  Z_{bin}| b=i) = O(1)\, . \]
 On the other hand $I(X_i; Z |b=i) \leq I(X_i; Z_{bin}|b=i) |\Lambda |$ by the above lemma. Hence we are done with Proposition \ref{IC_prop}. Finally, we prove Theorem \ref{IC_st} by a reduction to the above proposition. 

Consider any protocol $\mathcal C$ for the information causality game in the setting of Theorem \ref{main_thm}. From $\mathcal C$, we produce an auxiliary protocol $\mathcal C^*$ with only a binary message (from Alice to Bob). By applying Proposition \ref{IC_prop} to $\mathcal C^*$ we prove the result. In $\mathcal C^*$ the binary message sent from Alice to Bob is denoted $\alpha_{bin}$, and  Bob's output is denoted by $\widetilde{Z}$. $\mathcal C^*$ is defined as follows:  Before even receiving their inputs,  Alice and Bob use shared randomness to ``guess" $\alpha$ which Alice was going to send to Bob in the original protocol $\mathcal C$. Let that guess be $\overline{\alpha}\in\Sigma$. Bob then proceeds with the original protocol treating $\overline{\alpha}$ as the message that Alice intended to send him in the original protocol $\mathcal C$.  By following the original protocol, he produces a $\overline{Z}\in \Lambda$. Alice given her input makes the appropriate measurement on her system according to the recipe given in $\mathcal C$ producing some $\alpha\in \Sigma$. She checks if $\alpha=\overline{\alpha}$ in which case she sends $\alpha_{bin}=1$ to Bob; otherwise she sends $\alpha_{bin}=0$. If Bob receives $\alpha_{bin}=1$, he sets $\widetilde{Z}=\overline{Z}$, and outputs $\widetilde{Z}\alpha_{bin}$. On the other hand, if he receives $\alpha_{bin}=0$, he discards $\overline{Z}$, and produces a completely random $\widetilde{Z}$. He again outputs $\widetilde{Z} \alpha_{bin}$. The choice that Bob outputs $\alpha_{bin}$ as well as $\widetilde{Z}$ is for technical convenience. In any case, this only increases the output size of Bob in $\mathcal C^*$ from $|\Lambda|$ to $2 |\Lambda|$, which only slightly affects the constant in $O(\cdot)$ in Theorem \ref{IC_st}. Now let us fill the rest of the details.
\begin{proof}[Proof of Theorem \ref{IC_st}]
The setup as usual is the following: Alice's input $\mathbf{X}=(X_1,X_2,\ldots, X_n)$ is taken from an unbiased pairwise independent distribution. Given her input, Alice makes a measurement on her system and sends $\alpha$ to Bob where $\alpha\in \Sigma$. Bob, given his input $b=i\in [N]$ and $\alpha$, outputs a $Z\in \Lambda$. Given any protocol $\mathcal C$ for the above, consider $\mathcal C^*$ which reduces the communication to binary according to the recipe described above: so the message would be $\alpha_{bin}$ and Bob's output $\widetilde{Z}\alpha_{bin}$.

Now by the chain rule, 
\[ I(X_i;\alpha_{bin}\widetilde{Z}|b=i)= I(X_i;\alpha_{bin}|b=i) + I(X_i;\widetilde{Z}|b=i, \alpha_{bin})\,. \]
So we have
\[ I(X_i;\alpha_{bin}\widetilde{Z}|b=i) \geq I(X_i;\widetilde{Z}|b=i, \alpha_{bin})\,. \]
We know $\sum_{i=1}^n I(X_i;\alpha_{bin}\widetilde{Z}|b=i)=O(|\Lambda|)$ by Proposition \ref{IC_prop}. Because in this situation Alice only communicates a bit $\alpha_{bin}$ to Bob who outputs $\widetilde{Z}\alpha_{bin}$ which has alphabet size $2|\Lambda|$. 

Now we want to analyze the term $ I(X_i;\widetilde{Z}|b=i, \alpha_{bin})$. First notice that the event $b=i$ is independent of the random variable $\alpha_{bin}$ because $\alpha_{bin}$ is produced entirely on Alice's side. Hence, 
\[\Pr[\alpha_{bin}=1  ] = \Pr[\alpha_{bin}=1|b=i] \]  
Now conditioned on $\alpha_{bin}=0$, $\widetilde{Z}$ is chosen completely at random. Hence,
\[I(X_i; \widetilde{Z}| b=i,\alpha_{bin}=0)=0\, .\]
This proves that 
\[ I(X_i;\widetilde{Z}|b=i, \alpha_{bin})= \Pr[\alpha_{bin}=1] I(X_i; \widetilde{Z}| b=i,\alpha_{bin}=1) \,. \]
Conditioning  on $\alpha_{bin}=1$ is independent of $X_i$ as one can imagine that Alice first produces $\alpha$ using the original protocol, and then selects $\overline{\alpha}\in \Sigma$ uniformly at random. Since $\overline{\alpha}$ is completely independent of everything else we see that the event $\alpha_{bin}=1$ is independent of $X_i$. The only effect of conditioning on the event $\alpha_{bin}=1$ is that the joint distribution of $(X_i,\widetilde{Z})$ becomes the same as the joint distribution of  $(X_i,Z)$. This means that this conditioning reduces us to the case of original protocol $\mathcal C$ where $\alpha\in \Sigma$ was communicated in its entirety. Hence, we have
\[ I(X_i; \widetilde{Z}| b=i,\alpha_{bin}=1)=I(X_i;Z| b=i). \]
 Noticing that $\Pr[\alpha_{bin}=1]=\frac{1}{|\Sigma|}$ and using
 \[\sum_{i=1}^n I(X_i;\alpha_{bin}\widetilde{Z}|b=i)=O(|\Lambda|)\, ,\] 
 we get our desired result
 \[ \sum_{i=0}^N I(X_i; Z |b=i)=O( |\Sigma| |\Lambda|)\,. \]
 \end{proof}

\section[App.~Distributed nonlocal computation]{Reduction to distributed nonlocal computation}\label{app:nonlocal}
Here we give an alternative proof of Theorem \ref{main_thm} by establishing a higher alphabet variant of powerful result of Linden et al. \cite{nonlocal_comp}. The main result of their work is that for a certain broad class of games, quantum and classical strategies are equivalent in their power. This class of games, called distributed non-local computation games, is defined as follows. 

\begin{definition}
Let $f:\F_2^n\rightarrow \F_2$ be any function and $\mathcal D$ any distribution on $\F_2^n$. In the $2$-party non-local computation of $(f,\mathcal D)$,  Alice receives $x\in \F_2^n$ uniformly at random and Bob receives $y\overset{\rm{def}}{=}z-x$ where $z$ is chosen according to $\mathcal D$. Alice and Bob succeed if their outputs $\alpha,\beta\in\F_2$ satisfy $\alpha+\beta= f(z)$. 
\end{definition}
\begin{theorem}[Linden et al.]\label{nonlocal_bin}
For any binary distributed computation problem $S$, given by $(f,\mathcal D)$, the entangled value and the classical value of the game coincide to the best linear approximation of $f$.
\[ \omega^*(S)=\omega(S)= \max_{l \in \mathcal L} \, \Pr_{z\in \mathcal D} [f(z)= l(z)]  \, ,\]
where $\mathcal L=\{ l:\F_2^n\rightarrow \F_2 \,|\, l(x+y)=l(x)+l(y)\}$ is the set of all linear functions. 
\end{theorem}
Now we can naturally define the distributed version of the $\bf\mathsf{CHSH_q}$ game following the above recipe. 

\begin{definition}\label{def:distributed_CHSH_q}
In $\bf\mathsf{CHSH_q^{dist}}$, Alice and Bob receive $(\alpha,\gamma)\in \F_q^2$ and $(\beta,\delta)\in \F_q^2$. Their objective is to produce outputs $a$ and $b$ satisfying $a+b=(\alpha+\beta)(\gamma+\delta)$. 
\end{definition}

Any strategy  $\mathcal P$ for $\bf\mathsf{CHSH_q}$ results in a natural strategy $\mathcal P^{\rm{dist}}$ for $\bf\mathsf{CHSH_q^{dist}}$ as follows: assume Alice and Bob have a strategy $\mathcal P$ succeeding in $\bf \mathsf{CHSH_q}$ game with probability $p_{win}=1/q+ E \, (q-1)/q $. In $\mathcal P^{\rm{dist}}$, first Alice and Bob use $\mathcal P$ on inputs $(\alpha,\delta)$ to produce $a_1$ and $b_1$, and then use $\mathcal P$ for a second time to produce $a_2$ and $b_2$ for inputs $(\gamma,\beta)$. Finally Alice outputs $a= a_1+a_2+\alpha\gamma$, and Bob outputs $b=b_1+b_2+\beta\delta$. 

We have the following proposition about $\mathcal P^{\rm{dist}}$ which is straightforward to establish.
\begin{proposition}\label{nonlocal_prop}
Let $\mathcal P$ be a regular protocol for $\bf \mathsf{CHSH_q}$ with bias $E$. Let $\mathcal P^{dist}$ be the resulting distributed protocol obtained through above procedure from $\mathcal P$. The winning probability of Alice and Bob in the protocol $\mathcal P^{\rm{dist}}$ has bias $E^2$, i.e. we have
\[ p_{win}=\frac{1}{q}+\frac{q-1}{q}E^2\, . \]
Moreover, the resulting protocol for $\bf \mathsf{CHSH_q^{dist}}$ is itself regular. 
\end{proposition}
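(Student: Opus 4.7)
The plan is a direct algebraic manipulation followed by a convolution of two independent error distributions. First I would verify the algebraic identity: expanding $(\alpha+\beta)(\gamma+\delta) = \alpha\gamma + \alpha\delta + \beta\gamma + \beta\delta$, and using the fact that $\alpha,\gamma$ sit on Alice's side while $\beta,\delta$ sit on Bob's side (so the terms $\alpha\gamma$ and $\beta\delta$ can be added locally), the outputs of $\mathcal{P}^{dist}$ satisfy
\[
a + b - (\alpha+\beta)(\gamma+\delta) \;=\; (a_1 + b_1 - \alpha\delta) + (a_2 + b_2 - \beta\gamma) \;=\; \epsilon_1 + \epsilon_2,
\]
where $\epsilon_i$ denotes the $\F_q$-valued error of the $i$-th invocation of $\mathcal{P}$. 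By the standing convention that the two invocations are run on independent copies of the shared resource, $\epsilon_1$ and $\epsilon_2$ are independent; and by regularity of $\mathcal{P}$ each is distributed according to $\Pr[\epsilon_i = 0] = \frac{1}{q} + \frac{q-1}{q}E$ and $\Pr[\epsilon_i = k] = \frac{1}{q} - \frac{E}{q}$ for every $k \in \F_q^*$, independently of which pair of inputs $\mathcal{P}$ was called on.

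Next I would compute the distribution of $\epsilon_1 + \epsilon_2$ by a short convolution. Setting $p_0 = \frac{1}{q} + \frac{q-1}{q}E$ and $p = \frac{1}{q} - \frac{E}{q}$, one immediately has
\[
\Pr[\epsilon_1 + \epsilon_2 = 0] \;=\; p_0^2 + (q-1)\,p^2, \qquad \Pr[\epsilon_1 + \epsilon_2 = s] \;=\; 2\,p_0\,p + (q-2)\,p^2 \text{ for } s \in \F_q^*.
\]
Expanding these two expressions and simplifying yields $\frac{1}{q} + \frac{q-1}{q}E^2$ in the zero case and $\frac{1-E^2}{q} = \frac{1}{q} - \frac{E^2}{q}$ in every non-zero case. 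This simultaneously establishes both halves of the proposition: the bias of $\mathcal{P}^{dist}$ is exactly $E^2$, and all non-zero errors occur with equal probability, i.e.\ $\mathcal{P}^{dist}$ is itself regular.

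The only mild subtlety is the justification of independence of $\epsilon_1$ and $\epsilon_2$, which amounts to the standard convention of running the two invocations of $\mathcal{P}$ on two independent copies of the shared entangled state (and with independent shared randomness). Once this is in hand, no further structure of the underlying quantum or classical strategy is needed; because regularity of $\mathcal{P}$ makes the individual error distribution \emph{input-independent}, one never has to inspect conditional distributions of $\epsilon_i$ given the specific inputs, and the convolution step is entirely elementary. This clean decoupling is precisely what will make the reduction to $\mathsf{CHSH_q^{dist}}$ a useful and self-contained route to Theorem~\ref{main_thm}.
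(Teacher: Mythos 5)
Your proof is correct and follows essentially the same route as the paper: decompose the error of $\mathcal P^{\rm dist}$ as the sum of the two independent, input-independent errors of the two invocations of $\mathcal P$, then convolve the two error distributions. The only difference is that you compute the full convolved distribution explicitly (which verifies the regularity of $\mathcal P^{\rm dist}$ directly), whereas the paper computes only the success probability $p_0^2+(q-1)p^2$ and appeals to the symmetry of $\F_q^*$ for regularity; your version is the slightly more complete write-up, and you are right that independence of the two invocations is the one hypothesis that must be made explicit.
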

The fact that $\mathcal P^{\rm{dist}}$ is regular when  $\mathcal P$ is regular because of the symmetry all elements of $\F_q^*$. To see the statement about the winning probability of $\mathcal P^{\rm{dist}}$ , notice that $\mathcal {P^*}$ succeeds if and only if the type of errors produced in two uses of $\mathcal P$ in $\mathcal P^{\rm{dist}}$ give errors  of opposite sign. Hence, the following calculation confirms the above proposition. 
\[ \left(\frac{1}{q}+\frac{q-1}{q}E\right)^2 +\sum_{k=1}^{q} \left(\frac{1}{q}-\frac{E}{q}\right)^2 =\frac{1}{q}+\frac{q-1}{q}E^2 \, . \]
Because of Proposition \ref{nonlocal_prop}, it suffices to prove an upper bound  of $2/q-1/q^2$ for the winning probability of regular strategies in $\bf\mathsf{CHSH_q^{dist}}$ to finish the proof of Theorem \ref{main_thm}. Notice that, as hinted before, this value is exactly the winning probability achieved by the trivial strategy in which both players just output $0$. This can be seen to be the best linear approximation to the polynomial $f(x,y)=xy$ which is aligned with the philosophy of Linden et al. \cite{nonlocal_comp} on equivalence of quantum and classical players for the distributed version of the game. 

Now we proceed to the proof of Theorem \ref{main_thm}. For simplicity, throughout this proof we shall assume $q$ is a prime. However, essentially the same argument works in general with some small modifications  mentioned at the end of the section. First we need the following lemma.
\begin{lemma}\label{fourier_prop}
Let $\omega$ be a $q^{\rm{th}}$ primitive root of unity. For any set of unit vector $u_x,v_y\in \C^n$ we have
 \begin{equation}\label{eqn:fourier_prop} \left|\sum_{x,y} \omega^{-xy} \langle u_x,v_y\rangle\right| \leq q^{3/2} \, .\end{equation}
 \end{lemma}
 \begin{proof}
Since the discrete Fourier transform matrix $H_{x,y}=\frac{1}{\sqrt{q}}\omega^{-xy}$ is unitary it follows that for any $f:\F_q\rightarrow \C$ and $g:\F_q\rightarrow \C$ then 
\[ \Big |\sum_{x,y} \omega^{-xy} f(x)g(y)\Big | \leq \sqrt{q} \|f\|_2 \|g\|_2\, . \]
Here we used Cauchy-Schwarz inequality between $\hat{f}(y)$ and $g(y)$. To finish the proof of the lemma, we apply the above fact to the $n$ coordinate functions $f_i,g_i:\F_q\rightarrow \C$ defined by $f_i(x)=\overline{u}_x(i)$ and $g_i(x)=v_y(i)$. Combining that with another Cauchy-Schwarz we get
\begin{align*}
\Big|\sum_{x,y} \omega^{-xy} \langle u_x,v_y\rangle \Big |&= \Big |\sum_{i,x,y} \omega^{-xy} f_i(x)g_i(y)\Big | \\
&\leq \sqrt{q} \sum_{i=1}^n  \left(\sum_{x\in \F_q} |u_x(i)|^2 \right)^{\frac{1}{2}}\left(\sum_{y\in \F_q} |v_y(i)|^2 \right)^{\frac{1}{2}} \\
& \leq \sqrt{q} \left(\sum_{i,x} |u_x(i)|^2\right)^{\frac{1}{2}} \left(\sum_{i,y} |v_y(i)|^2\right)^{\frac{1}{2}}\\
&= q^{3/2} \, .
\end{align*}
\end{proof}

We also need the following convenient notation for quantifying the probability of different types of error.  
\begin{notation} We let $p_k= \Pr_{ a,b\leftarrow \mathcal P^{\rm{dist}}}[ a+b= k+(\alpha+\beta)(\gamma+\delta)]$ according to $\mathcal P^{\rm{dist}}$ (defined after definition \ref{def:distributed_CHSH_q}).
\end{notation}
\begin{proof}[Proof of Theorem \ref{main_thm}]

Let $\omega$ be a primitive $q^{\rm{th}}$ root of unity. First notice that since $p_0= \frac{1}{q}+ \frac{q-1}{q} E^2$ and  $p_k= \frac{1}{q}- \frac{1}{q} E^2$ for $k\in \F_q^*$, it follows that 
 \begin{equation}\label{eqn:sec_nonlocal} \sum_{k=0}^{q-1} p_k\, \omega^k = E^2 + \left(\frac{1}{q}-\frac{E^2}{q}\right) \sum_{k=0}^{q-1} \omega^k= E^2 . \end{equation}
Denote by $x=(\alpha,\gamma)$ the input of Alice, and by $y=(\beta,\delta)$ the input of Bob. Let $P_x^a$ and $Q_y^b$ be the measurement operators for Alice and Bob given inputs $x$ and $y$. Define $U_x=\sum_{a=0}^{q-1} \omega^a P_x^a$ and $V_y=\sum_{b=0}^{q-1} \omega^b Q_y^b$. Since we have not assumed any bound on the dimension of Alice and Bob's system, we can assume the operators $P_x^a$ and $Q_y^b$ are projections. This implies that the operators $U_x$ and $V_y$'s are unitary. Now we have
\begin{align*}
 E^2= \sum_{k=0}^{q-1} p_k \omega^k & =\sum_{k=0}^{q-1} \,  \Ex_{x,y}\left[ \sum_{a+b=k+xy}  \langle \psi | P_x^a \otimes Q_y^b |\psi\rangle\right] \, \omega^k\\
 &=  \langle\psi| |\Ex_{x,y} \left[ \sum_{a,b=0}^{q-1} \omega^{a+b-xy} P_x^a\otimes Q_y^b\ \right] |\psi\rangle \\
 &= \frac{1}{q^2}\sum_{x,y\in F_q^2} \omega^{-xy} \langle \psi | U_x\otimes V_y |\psi\rangle\\
 &=\frac{1}{q^2}\sum_{x,y\in F_q^2} \omega^{-xy} \langle u_x|v_y\rangle,
  \end{align*}
 where $U_x^\dagger\otimes I |\psi\rangle= |u_x\rangle$ and $I\otimes V_y|\psi\rangle=|v_y\rangle$ are unit vectors.  So Lemma \ref{fourier_prop} implies $E\leq \frac{1}{\sqrt{q}}$ which establishes the desired result. 
 \end{proof} 
\paragraph{General prime powers.} In the above discussion we assumed $q$ was a prime. Here we present the slight modifications necessary in the more general case of $q=p^s$ with $s$ not necessarily $1$. 

Suppose we have a function $\chi:\F_q\rightarrow \C$ with the following properties:
\begin{enumerate}
\item For all $x\in \F_q$ we have $|\chi(x)|=1$.
\item $\chi(x+y)=\chi(x)\chi(y)$. 
\item $\sum_{x\in \F_q} \chi(x)=0$.
\end{enumerate}
Now if we replace the function $x\mapsto \omega^x$ with $x \mapsto \chi(x)$ in our argument we claim that the argument goes through exactly the same as before. To check this, first note that the $q\times q$ matrix $M_{xy}=\frac{1}{\sqrt{q}}\chi(xy)$ is again unitary. This is because for any $y\neq y'\in \F_q^*$ we have
\[ \sum_{x\in \F_q} \widebar{\chi(xy)}\cdot  \chi(xy')=\sum_{x\in \F_q} \chi(x(y'-y))=0, \]
where we used the second and third properties of $\chi$. This establishes that the analogue of equation (\ref{eqn:fourier_prop}) holds. 

Next note that the main property we used in equation (\ref{eqn:sec_nonlocal}) was the fact that $\sum_{k\in \F_q} \chi(k)=0$ which again holds here. Similarly, from the fact that $|\chi(a)|=1$ it follows that operators of the form  $U_x=\sum_{a\in \F_q} \chi(a) P_x^a$ are again unitary. It is easy to check the rest of our calculations are similarly valid given properties 1-3 of $\chi$.

Finally, we shall note that a function such as $\chi$ is easy to construct using an additive isomorphism between $\F_q$ and $\F_p^s$ where $p$ is the characteristic of $q=p^s$. \footnote{A particularly common choice is $\omega^{\Tr(\cdot)}$ where $\omega$ is $p^{\rm th}$ root of unity and $\Tr(\cdot)$ denotes the trace function given by $\alpha \mapsto \alpha+\alpha^p +\ldots +\alpha^{p^{s-1}}$ which can be shown to be a map from $\F_q$ to $\F_p$.} 

\section[App.~Projective transforms and applications]{Projective transforms and the classical value of $\bf\mathsf{CHSH_q}$}\label{lowerbound_classical}
The goal of this section is to give a proof of Lemma \ref{classical:lemma}. To do so we need to introduce some basic concepts from projective geometry.  The major advantages of working over the projective plane as opposed to the affine plane for our purposes  here  is that firstly, the duality between points and lines becomes quite transparent and clean over the projective plane, and secondly (and more importantly), it turns out that set of \emph{projective transformations}, while still preserving the point-line incidence structure, is much larger and richer than the group of affine transformations.

Recall that the points of the projective plane $\mathbb{P}\F_q^2$ are given by the triples $(x:y:z)$, with at least  one coordinate non-zero, where we identify any two points $(x:y:z)$ and $(\lambda x:\lambda y:\lambda z)$ for $\lambda \in \F_q^*$. The lines in the projective plane are given by triples $(l:m:n)$ consisting of all point $(x:y:z)$ satisfying $lx+my+nz=0$. Hence, $\mathbb{P}\F_q^2$ consists of a total of $q^2+q+1$ points and lines, where each line contains $q+1$ points, and similarly each point is contained in $q+1$ lines. One can go from the projective
plane to the affine plane by  discarding all the points on the ``line at infinity", which consists 
 of points of the form $(a:b:0)$. Any point remaining can then be put in the form $(a:b:1)$ which corresponds to the point $(a,b)$
of the affine plane. Two points lying on the same a vertical line in the affine plane
are $(a:b:1)$ and $(a:b':1)$. These lie on the projective line $(1:0:-a)$.
These projective lines all go through the point $(0:1:0)$, or 
the ``vertical infinity".  

Next we describe the concept of a projective transformation. Any $3\times 3$ invertible matrix $A$ induces a map on $\mathbb{P}\F_q^2$ by its action on $\F_q^3$. Two matrices $A$ and $B$ induce the same action on $\mathbb{P}\F_q^2$, if they are related to by $A= \lambda B$ by some $\lambda\in \F_q^*$. Each element of this equivalence class represents a projective transformation on the projective plane. 

Now we are almost ready to prove Lemma \ref{classical:lemma}. The main idea of the proof is to start from the given sets of points $P$ and lines $L$ with large $I(P,L)$ which possibly are far from satisfying the conditions of Fact \ref{important_fact} and apply a projective transformation. After, this we discard some of our points and lines so that the hypothesis of Fact \ref{important_fact} is satisfied. We argue that a good portion of our incidences would remain after the above deletion process finishing the proof. The details are as follows.

\begin{proof}[Proof of Lemma \ref{classical:lemma}]
Let $P$ and $L$ in $\mathbb{P}\F_q^2$ be the set of points and lines given by the hypothesis. In the argument we shall operate under the assumption that $|P|,|L|\leq \frac{q}{2}$. 
Indeed, one can cut the sizes of $P$ and $L$ by any constant factor without losing much in  $|I(P,L)|$ as follows: first we shrink $|P|$ by keeping the $\frac{q}{2}$ points with the most number of incident lines---which shrinks $|I(P,L)|$ only by a factor of $\frac{2|P|}{q}=O(1)$ factor---and after keeping those lines with the most number incident points (among the $\frac{q}{2}$ size points that we kept).

Now we apply a random projective transformation to $\mathbb{P}\F_q^2$ which can be seen to be equivalent to the following operation: Take a line $l$ from all $q^2+q+1$ lines in $\mathbb{P}\F_q^2$ uniformly at random to be the new line at the infinity. Choose a random point on $l$ to be the new vertical point at infinity (and a random points from the remaining $q$ points on the line at the infinity to be the new horizontal point at infinity, but this latter point plays no role for us). Next we discard all but a single line from any subset of $L$ that are parallel after this operation. Similarly, we discard all but one point from any set of points in $P$ that are on the same vertical line  after this operation. We also discard any point of $P$ on the new line at infinity; this puts us back in the case of the affine plane $\F_q^2$. Let $P'$ and $L'$ be the set of points and lines in $\F_q^2$ obtained after the above operation. To finish the proof, it suffices to show that in expectation $P'$, $L'$ and $I(P',L')$ would remain within constant factors of their original sizes. 

Let us first compute the probability that two given lines $\ell_1$ and 
$\ell_2$ have the same slope.
If we choose a random line as $\ell_\infty$, two lines will have the 
same slope if they intersect the line $\ell_\infty$ at the same point. 
The probability of this event for a specific pair of lines is $1/(q+1)$
(assuming that neither is the line at infinity), 
as this happens exactly when $\ell_\infty$ intersects $\ell_1$ at the same
point that $\ell_2$ intersects $\ell_1$.  We now compute the probability
that a given line survives. The probability that it is not the line at 
infinity is $\frac{q^2+q}{q^2+q+1}$, and given that it is not the line 
at infinity, it is eliminated with probability at most $\frac{q-1}{2(q+1)}$,
since there are $\frac{q-1}{2}$ other lines that could eliminate it by
having the same slope.
This gives a total probability that it is eliminated of
$\frac{q(q+3)}{2(q^2+q+1)} < \frac{1}{2}$. This shows that the expected 
number of lines that survive is at least $\frac{q+1}{4}$. 

We next need to analyze the effect of choosing the point at vertical 
infinity at random and throwing out points lying on the same vertical line. The argument is analogous to the above. Again, the
probability that a given two points lie on a vertical is $1/(q+1)$. The
probability that a point is not on the line at infinity is
$\frac{q^2}{q^2+q+1}$. By the same argument as before, we have that the
expectation that a given point survives is at least
$\frac{q^2(q+3)}{2(q^2+q+1) (q+1)}$.
Thus, the expected number of points that survive is also $\Theta(q)$.
Furthermore the process of
throwing out points and the process of throwing out lines are independent. 
Thus, the probability that we keep a point-line incidence is at 
least $\Omega(1)$ which finishes the proof.
\end{proof}


 \end{document}